\newcommand*{\rom}[1]{\expandafter\@slowromancap\romannumeral #1@}
\algrenewcommand\algorithmicrequire{\textbf{Precondition:}}
\algrenewcommand\algorithmicensure{\textbf{Postcondition:}}
\definecolor{cornellred}{rgb}{0.7, 0.11, 0.11}
\theoremstyle{plain}
\newenvironment{remark}[1][Remark]{\begin{trivlist}
\item[\hskip \labelsep {\bfseries #1}]}{\end{trivlist}}
\newcommand{\SSR}{\Gamma}
\newcommand{\rbf}{\mathbf{r}}
\newcommand{\sbf}{\mathbf{s}}
\newcommand{\xbf}{\mathbf{x}}
\newcommand{\vbf}{\mathbf{v}}
\newcommand{\alphabf}{\pmb{\alpha}}
\newcommand{\lagrange}{\mathcal{L}}
\newcommand{\opt}{\text{OPT}}
\newcommand{\alg}[1]{\text{ALG}_{#1}}
\newcommand{\hist}[1]{\mathcal{H}_{#1}}
\newcommand{\nBidder}{n}
\newcommand{\type}[1]{t^{#1}}
\newcommand{\typeS}[1]{\mathcal{T}^{#1}}
\newcommand{\bid}[1]{s^{#1}}
\newcommand{\typeP}{\mathbf{t}}
\newcommand{\bidP}{\mathbf{s}}
\newcommand{\outS}{\mathcal{O}}
\newcommand{\prior}[1]{F^{#1}}
\newcommand{\priorJ}{\mathbf{F}}
\newcommand{\alloc}{\mathcal{A}}
\newcommand{\mech}{\mathcal{M}}
\newcommand{\payP}{\mathbf{p}}
\newcommand{\pay}[1]{p^{#1}}
\newcommand{\val}{\textrm{val}}
\newcommand{\valk}[1]{\textrm{val}^{#1}}
\newcommand{\prob}[2][]{\text{\bf Pr}\ifthenelse{\not\equal{}{#1}}{_{#1}}{}\!\left[#2\right]}
\newcommand{\expect}[2][]{\text{\bf E}\ifthenelse{\not\equal{}{#1}}{_{#1}}{}\!\left[#2\right]}
\newcommand{\given}{\,\mid\,}
\newcommand{\ex}[2]{\expect[#1]{#2}}
\newcommand{\Ex}[2]{\expect[#1]{#2}}
\newcommand{\typeO}[1]{\mathbf{t}^{#1}}
\newcommand{\bern}[1]{\text{Bern}[#1]}
\newcommand{\Di}[1]{\mathcal{D}_{#1}}
\newcommand{\vmax}{v_{\max}}
\DeclareMathOperator{\argmax}{argmax}
\DeclareMathOperator{\argmin}{argmin}
\DeclareMathOperator{\OPT}{OPT}
\begin{document}
% Title portion. Note the short title for running heads
\title[Bernoulli Factories and Black-Box Reductions]{Bernoulli Factories and Black-Box Reductions\\ in Mechanism Design}

\author{Shaddin Dughmi}
\affiliation{%
  \institution{Department of Computer Science, University of Southern California}
   \city{Los Angeles}
  \state{California}
  \country{USA}
  }
\email{shaddin@usc.edu}

\author{Jason Hartline}
\affiliation{%
  \institution{Computer Science Department, Northwestern University}
  \city{Evanston}
  \state{Illinois}
  \country{USA}
}
\email{hartline@eecs.northwestern.edu}

\author{Robert D. Kleinberg}
\affiliation{%
  \institution{Department of Computer Science, Cornell University}
   \city{Ithaca}
  \state{New York}
  \country{USA}
}
\email{rdk@cs.cornell.edi}

\author{Rad Niazadeh}
\affiliation{%
  \institution{Chicago Booth School of Business, University of Chicago}
  \city{Chicago}
  \state{Illinois}
  \country{USA}
}
\email{rad.niazadeh@chicagobooth.edu}

\begin{abstract}
We provide a polynomial time reduction from Bayesian
incentive compatible mechanism design to Bayesian algorithm design for welfare maximization problems. Unlike prior results, our reduction achieves exact incentive compatibility for problems with multi-dimensional and continuous type spaces. The key technical barrier preventing exact incentive compatibility in prior black-box reductions is that repairing violations of incentive constraints requires understanding the distribution of the mechanism's output, which is typically \#P-hard to compute. Reductions that instead estimate the output distribution by sampling inevitably  suffer from sampling error, which typically precludes exact
incentive compatibility. We overcome this barrier by employing and generalizing the
computational model in the literature on \emph{Bernoulli Factories}. In a Bernoulli factory problem, one is given a function mapping the bias of an ``input coin'' to that of an ``output coin'', and the challenge is to efficiently simulate the output coin given only sample access to the input coin. This is the key ingredient in designing an incentive compatible mechanism for bipartite matching,
which can be used to make the approximately incentive compatible reduction of \citet{HKM-15} exactly incentive compatible.

\end{abstract}

%
% The code below should be generated by the tool at
% http://dl.acm.org/ccs.cfm
% Please copy and paste the code instead of the example below.
%
\begin{CCSXML}
<ccs2012>
<concept>
<concept_id>10003752.10010070.10010099</concept_id>
<concept_desc>Theory of computation~Algorithmic game theory and mechanism design</concept_desc>
<concept_significance>500</concept_significance>
</concept>
<concept>
<concept_id>10003752.10010070.10010099.10010101</concept_id>
<concept_desc>Theory of computation~Algorithmic mechanism design</concept_desc>
<concept_significance>500</concept_significance>
</concept>
<concept>
<concept_id>10003752.10010070.10010099.10010100</concept_id>
<concept_desc>Theory of computation~Algorithmic game theory</concept_desc>
<concept_significance>500</concept_significance>
</concept>
<concept>
<concept_id>10003752.10003809</concept_id>
<concept_desc>Theory of computation~Design and analysis of algorithms</concept_desc>
<concept_significance>300</concept_significance>
</concept>
</ccs2012>
\end{CCSXML}

\ccsdesc[500]{Theory of computation~Algorithmic game theory and mechanism design}
\ccsdesc[500]{Theory of computation~Algorithmic mechanism design}
\ccsdesc[500]{Theory of computation~Algorithmic game theory}
\ccsdesc[300]{Theory of computation~Design and analysis of algorithms}

%
% End generated code
%

\keywords{Bayesian mechanism design, Blackbox reductions, Bayesian Incentive Compatible (BIC) mechanisms, Bernoulli factories}

\maketitle

% The default list of authors is too long for headers.
\renewcommand{\shortauthors}{Dughmi et al.}

\section{Introduction}
%% our result, and comparison to the literature.
We resolve an open question from \citet{HKM-11,HKM-15}, which is considered as one of the fundamental algorithmic questions in the Bayesian mechanism design:
{\em There is a polynomial time reduction from Bayesian incentive
  compatible mechanism design to Bayesian algorithm design for welfare
  maximization problems.} A Bayesian algorithm is one that
  performs well in expectation when the input is drawn from a known
  distribution. By polynomial time, we mean polynomial in the number
  of agents and the combined ``size'' of their type spaces. The key
distinction between our result and those of \citet{HKM-11,HKM-15} is
%that it satisfies \emph{exact} Bayesian incentive compatibility even
%for problems in which agents' preferences are multidimensional and
%drawn from a continuous space.
that both (a) the agents' preferences can be
multi-dimensional and from a continuous space (rather than
single-dimensional or from a discrete space), and (b) the
resulting mechanism is exactly Bayesian incentive compatible (rather
than approximately Bayesian incentive compatible).

%% what is IC.
%% blackbox reductions
%% related to  in range.
A mechanism solicits preferences from agents, i.e., how much each
agent prefers each outcome, and then chooses an outcome.  {\em
  Incentive compatibility} of a mechanism requires that, though agents
could misreport their preferences, it is not in any agent's best
interest to do so.  A quintessential research problem at the
intersection of mechanism deign and approximation algorithms is to
identify black-box reductions from approximation mechanism design to
approximation algorithm design.  The key algorithmic property that
makes a mechanism incentive compatible is that, from any individual agent's
perspective, it must be {\em maximal-in-range}, specifically, the
outcome selected maximizes the agent's utility less some cost that is
a function of the outcome (e.g., this cost function can depend on
other agents' reported preferences).\footnote{In general, one can think of maximal-in-range for all agents, meaning that the outcome selected maximizes agents' social welfare less some cost that is a function of the outcome among a particular range of feasible outcomes.}
% (more precisely, they are {\em affine
%  maximizers}), meaning that they always select a utility maximizing
%outcome for the agent from some menu of outcomes (the ``range''). 
%This
%menu is determined by other agents' reported preferences, though must
%be independent of the report of the agent in question.

%
% how do Bayesian black box reductions work
%
The black-box reductions from Bayesian mechanism design to Bayesian
algorithm design in the literature are based on obtaining an
understanding of the distribution of outcomes produced by the
algorithm through simulating the algorithm on samples from agents'
preferences. 
%In prior work, reductions achieving exact incentive
%compatibility have been restricted to settings in which the
%distribution of outcomes can be characterized exactly. 
Notice that, even for structurally simple problems, calculating the
exact probability that a given outcome is selected by an algorithm can be
\#P-hard.  For example, \citet{HKM-15} show such a result for
calculating the probability that a matching in a bipartite graph is
optimal, for a simple explicitly given distribution of edge weights.
A black-box reduction for mechanism design must therefore produce
exactly maximal-in-range outcomes merely from samples.  This challenge motivates new questions for algorithm design from samples.

\paragraph{The Expectations from Samples Model.}  In traditional
algorithm design, the inputs are specified to the algorithm
exactly. In this paper, we formulate the {\em expectations from
  samples} model.  This model calls for drawing an outcome from a
distribution that is a precise function of the expectations of some random
sources that are given only by sample access.  Formally, a problem for
this model is described by a function $f: [0,1]^n \to \Delta(X)$ where
$X$ is an abstract set of feasible outcomes and $\Delta(X)$ is the
family of probability distributions over $X$.  For any $n$ input
distributions on support $[0,1]$ with unknown expectations
$\boldsymbol \mu = (\mu_1,\ldots,\mu_n)$, an algorithm for such a
problem, with only sample access to each of the $n$ input
distributions, must produce sample outcome from $X$ that is
distributed exactly according to $f(\mu_1,\ldots,\mu_n)$.

Producing an outcome that is approximately drawn according to the
desired distribution can typically be done from estimates of the
expectations formed from sample averages (a.k.a., Monte Carlo
sampling).  On the other hand, exact implementation of many natural
functions $f$ is either impossible for information theoretic reasons
or requires sophisticated techniques.  Impossibility generally
follows, for example, when $f$ is discontinuous.  The literature on
\emph{Bernoulli Factories} (e.g., \citet{keane1994bernoulli}), which
inspires our generalization to the expectations from samples model and
provides some of the basic building blocks for our results, considers
the special case where the input distribution and output distribution
are both Bernoullis (i.e., supported on $\{0,1\}$).

We propose and solve two fundamental problems for
the expectations from samples model.  The first problem considers the
biases $\mathbf p = (p_1,\ldots,p_m)$ of $m$ Bernoulli random
variables as the marginal probabilities of a distribution on
$\{1,\ldots,m\}$ (i.e., ${\mathbf p}$ satisfies $\sum_i p_i = 1$) and
asks to sample from this distribution.  We develop an algorithm that
we call the Bernoulli Race to solve this problem.

The second problem corresponds to the ``soft maximum'' problem given
by a regularizer that is a multiple $1/\lambda$ of the Shannon entropy
function $H({\mathbf p}) = -\sum_i p_i \log p_i$.  The marginal
probabilities on outcomes that maximize the expected value of the
distribution over outcomes less the cost of the negative entropy regularizer
are given by exponential weights i.e., the function
outputs $i$ with probability proportional to $e^{\lambda p_i}$ (this is a standard
  relationship that has, for example, been employed in previous work
  in mechanism design, e.g., \citet{HuangKannan-12}).  A
straightforward exponentiation and then reduction to the Bernoulli
Race above does not have polynomial sample complexity.  We develop an
algorithm that we call the Fast Exponential Bernoulli Race to solve this
problem.

\paragraph{Black-box Reductions in Mechanism Design.}

A special case of the problem that we must solve to apply the standard
approach to black-box reductions is the \emph{single-agent
  multiple-urns problem}. In this setting, a single agent faces a set
$X$ of urns, and each urn contains a random object whose distribution
is unknown, but can be sampled. The agent's type determines his
utility for each object; fixing this type, urn $i$ is associated with
a random real-valued reward with unknown expectation $\mu_i$.  Our
goal is to allocate the agent his favorite urn, or close to it.

As described above, incentive compatibility requires an algorithm for
selecting a high-value urn that is maximal-in-range.  If we could
exactly calculate the expected values $\mu_1,\ldots,\mu_n$ from the
agent's type, this problem is trivial both algorithmically and from a
mechanism design perspective: simply solicit the agent's type $t$ then
allocate him the urn with the maximum $\mu_i=\mu_i(t)$.  As described
above, with only sample access to the expected values of each urn, we
cannot implement the exact maximum.  Our solution is to apply the Fast
Exponential Bernoulli Race as a solution to the regularized
maximization problem in the expectations from samples model.  This
algorithm -- with only sample access to the agent's values for each
urn -- will assign the agent to a random urn with a high expected
value and is maximal-in-range.

The multi-agent reduction from Bayesian mechanism design to Bayesian
algorithm design of \citet{HKM-11,HKM-15} is based on solving a
matching problem between multiple
agents and outcomes, where an agent's value for an outcome is the
expectation of a random variable which can be accessed only through
sampling. We should also assert that \citet{BH-11} independently discovered a
  similar reduction based on solving a fractional assignment
  problem. Their reduction applies to finite, discrete type spaces and
  is approximately Bayesian incentive compatible.  Specifically, this problem generalizes the above-described
single-agent multiple-urns problem to the problem of matching agents
to urns with the goal of approximately maximizing the total weight of
the matching (the social welfare).  Again, for incentive compatibility
we require this expectations from samples algorithm to be maximal-in-range from each agent's perspective.  Using methods from
\citeauthor{agrawal2015fast}'s \citeyearpar{agrawal2015fast} work on
stochastic online convex optimization, we reduce this matching problem
to the single-agent multiple-urns problem.
%% We use methods from convex
%% online stochastic optimization from \citet{agrawal2015fast} to reduce this matching problem to the single-agent multiple-urns problem.

As stated in the opening paragraph, our main result -- obtained
through the approach outlined above -- is a polynomial time reduction
from Bayesian incentive compatible mechanism design to Bayesian
algorithm design.  The analysis assumes that agents' values are
normalized to the $[0,1]$ interval and gives additive loss in the
welfare.  The reduction is an approximation scheme and the dependence
of the runtime on the additive loss is inverse polynomial.  The
reduction depends polynomially on a suitable notion of the size of the
space of agent preferences.  For example, applied to environments
where agents have preferences that lie in high-dimensional spaces, the
runtime of the reduction depends polynomially on the number of points
necessary to approximately cover each agent's space of preferences.
More generally, the bounds we obtain are polynomial in the bounds of
\citet{HKM-11,HKM-15} but the resulting mechanism, unlike in the
proceeding work, is exactly Bayesian incentive compatible.

\paragraph{Organization.} 
The organization of the paper separates the development of the
expectations from samples model and its application to black-box
reductions in Bayesian mechanism design.  Section~\ref{sec:prelim-bf}
introduces Bernoulli factories and reviews basic results from the
literature.  Section~\ref{sec:racing} defines two central problems in
the expectations from samples model, sampling from outcomes with
linear weights and sampling from outcomes with exponential weights,
and gives algorithms for solving them.  We return to mechanism design
problems in Section~\ref{sec:single} and solve the single-agent
multiple urns problem.  In Section~\ref{sec:exact-BIC-reduction} we
give our main result, the reduction from Bayesian mechanism design to
Bayesian algorithm design.

%\bibliographystyle{apalike}
%\bibliography{refs}
%\end{document}

%Incentive compatible mechanism design 

%exact algorithms from samples

%cannot assume agents are willing to approximate

%missed goal: remove the epsilon for any epsilon-BIC mechanism.

%example: urns problem

%%% Local Variables: 
%%% mode: latex
%%% TeX-master: "main"
%%% End: 

\section{Basics of Bernoulli Factories}
\label{sec:prelim}
\label{sec:prelim-bf}

We use the terms {\em Bernoulli} and {\em coin} to refer to
distributions over $\{0,1\}$ and $\{\texttt{heads},\texttt{tails}\}$,
interchangeably.  The Bernoulli factory problem is about generating new
coins from old ones.

\begin{definition}[\citealp{keane1994bernoulli}]
\label{defn:berfactory}
Given function $f : (0,1) \rightarrow (0,1)$, the \emph{Bernoulli
  factory} problem is to output a sample of a Bernoulli variable with
bias $f(p)$ (i.e.\@ an $f(p)$-coin), given black-box access to
independent samples of a Bernoulli distribution with bias $p \in
(0,1)$ (i.e.\@ a $p$-coin).
\end{definition}   

To illustrate the Bernoulli factory model, consider the examples of
$f(p)=p^2$ and $f(p)=e^{p-1}$. For the former one, it is enough to
flip the $p$-coin twice and output $1$ if both flips are $1$, and
$0$ otherwise. For the latter one, the Bernoulli factory is still
simple but more interesting: draw $K$ from the Poisson distribution
with parameter $\lambda = 1$ (remind that the Poisson distribution with parameter $\lambda$ has probability of $K = k$ as
$\lambda^ke^{-\lambda}/k!$), flip the $p$-coin $K$ times and output
$1$ if all coin flips were $1$, and $0$ otherwise (see below).

The question of characterizing functions $f$ for which there is an
algorithm from sampling $f(p)$-coins from $p$-coins has been the main
subject of interest in this
literature~(e.g., \citet{keane1994bernoulli,nacu2005fast}). In particular,
\citet{keane1994bernoulli} provides necessary and sufficient
conditions for $f$, under which an algorithm for the Bernoulli factory
exists. Moreover, \citet{nacu2005fast} suggests an algorithm for
simulating an $f(p)$-coin based on polynomial envelopes of $f$. The
canonical challenging problem of Bernoulli factories -- and a
primitive in the construction of more general Bernoulli factories --
is the {\em Bernoulli Doubling} problem: $f(p) = 2p$ for $p \in
(0,1/2)$.  See \citet{latuszynski2010bernoulli} for a survey on this
topic.

Questions in Bernoulli factories can be generalized to multiple input
coins.  Given $f:(0,1)^m\rightarrow (0,1)$, the goal is sample from a
Bernoulli with bias $f(p_1,\ldots,p_m)$ given sample access to $m$
independent Bernoulli variables with unknown biases
$\mathbf{p}=(p_1,\ldots,p_m)$.  Linear functions
$f$ were studied and solved by \citet{Huber2015OptimalLB}.  For
example, the special case $m=2$ and $f(p_1,p_2) = p_1+p_2$, a.k.a.,
{\em Bernoulli Addition}, can be solved by reduction to the Bernoulli
Doubling problem (formalized below).

Questions in Bernoulli factories can be generalized to allow input
distributions over real numbers on the unit interval $[0,1]$ (rather
than Bernoullis over $\{0,1\}$).  In this generalization the question
is to produce a Bernoulli with bias $f(\mu)$ with sample access to
draws from a distribution supported on $[0,1]$ with expectation $\mu$.
These problems can be easily solved by reduction to the Bernoulli
factory problem:
\begin{enumerate}
\item[0.] {\em Continuous to Bernoulli}: Can implement Bernoulli with
  bias $\mu$ with one sample from distribution $\mathcal D$ with
  expectation $\mu$.  Algorithm:
\begin{itemize}
\item Draw $Z \sim {\mathcal D}$ and $P \sim \bern{Z}$.
\item Output $P$.   
\end{itemize}
\end{enumerate}
Below are enumerated the important building blocks for Bernoulli factories.
%(See Appendix~\ref{app:omitted} for proofs.)
%
\begin{enumerate}
\item {\em Bernoulli Down Scaling}: Can implement $f(p) = \lambda \cdot p$ for $\lambda \in [0,1]$ with one sample from $\bern{p}$. Algorithm:
\begin{itemize}
\item Draw $\Lambda \sim \bern{\lambda}$ and $P \sim \bern{p}$.
\item Output $\Lambda \cdot P$ (i.e., $1$ if both coins are $1$, otherwise $0$).
\end{itemize}
\item {\em Bernoulli Doubling:} Can implement $f(p) = 2p$ for $p \in (0,1/2 - \delta]$ with $O(1/\delta)$ samples from $\bern{p}$ in expectation.  The algorithm is complicated, see \citet{nacu2005fast}.
\item {\em Bernoulli Probability Generating Function:} Can implement $f(p) = \ex{k\sim\mathcal{D}}{p^{k}}$ for distribution $\mathcal{D}$ over non-negative integers with $\ex{K\sim\mathcal{D}}{K}$ samples from $\bern{p}$ in expectation.  Algorithm:
\begin{itemize}
 \item Draw $K\sim \mathcal{D}$ and $P_1,\ldots,P_K \sim \bern{p}$ (i.e., $K$ samples).
 \item Output $\prod_i P_i$ (i.e., $1$ if all $K$ coins are $1$, otherwise $0$).
\end{itemize}

\item {\em Bernoulli Exponentiation:} Can implement $f(p) =
  \exp(\lambda (p-1))$ for $p \in [0,1]$ and non-negative constant
  $\lambda$ with $\lambda$ samples from $\bern{p}$ in expectation.  Algorithm:
  Apply the Bernoulli Probability Generating Function algorithm for the
  Poisson distribution with parameter $\lambda$.

\item {\em Bernoulli Averaging:} Can implement $f(p_1,p_2) = (p_1 +
  p_2)/2$ with one sample from $\bern{p_1}$ or $\bern{p_2}$.
  Algorithm:
\begin{itemize}
\item Draw $Z \sim \bern{1/2}$, $P_1 \sim \bern{p_1}$, and $P_2 \sim \bern{p_2}$.
\item Output $P_{Z+1}$.
\end{itemize}

\item {\em Bernoulli Addition:} Can implement $f(p_1,p_2) = p_1 + p_2$
  for $p_1 + p_2 \in [0,1-\delta]$ with $O(1/\delta)$ samples from
  $\bern{p_1}$ and $\bern{p_2}$ in expectation.  Algorithm: Apply
  Bernoulli Doubling to Bernoulli Averaging.
\end{enumerate}

It may seem counterintuitive that Bernoulli Doubling is much more
challenging that Bernoulli Down Scaling.  Notice, however, that for a
coin with bias $p=1/2$, Bernoulli Doubling with a finite number of
coin flips is impossible.  The doubled coin must be deterministically
heads, while any finite sequence of coin flips of $\bern{1/2}$ has
non-zero probability of occuring. On the other hand a coin with
probability $p=1/2 - \delta$ for some small $\delta$ has a similar
probability of each sequence but Bernoulli Doubling must sometimes
output tails.  Thus, Bernoulli Doubling must require a number of coin
flips that goes to infinity as $\delta$ goes to zero.

\section{The Expectations from Samples Model}
\label{sec:racing}

The expectations from samples model is a combinatorial generalization
of the Bernoulli factory problem.  The goal is to select an outcome
from a distribution that is a function of the expectations of a set of
input distributions.  These input distributions can be accessed only
by sampling.

\begin{definition}
\label{defn:expectations-from-samples}
Given function $f : (0,1)^n \rightarrow \Delta(X)$ for domain
$X$, the \emph{expectations from samples} problem is to output a
sample from $f({\boldsymbol \mu})$ given black-box access to independent
samples from $n$ distributions supported on $[0,1]$ with expectations
${\boldsymbol \mu} = (\mu_1,\ldots,\mu_n) \in (0,1)^n$.  
\end{definition}

Without loss of generality, by the Continuous to Bernoulli construction
of Section~\ref{sec:prelim-bf}, the input random variables can be
assumed to be Bernoullis and, thus, this expectations of samples model
can be viewed as a generalization of the Bernoulli factory question to
output spaces $X$ beyond $\{0,1\}$.  In this section we propose and
solve two fundamental problems for the expectations of samples model.
In these problems the outcomes are the a finite set of $m$ outcomes $X
= \{1,\ldots,m\}$ and the input distributions are $m$ Bernoulli
distributions with biases $\mathbf p = (p_1,\ldots,p_m)$.

In the first problem, biases correspond to the marginal probabilities
with which each of the outcomes should be selected.  The goal is to
produce random $i$ from $X$ so that the probability of $i$ is exactly
its marginal probability $p_i$.  More generally, if the biases do not
sum to one, this problem is equivalently the problem of {\em random
  selection with linear weights}.

The second problem we solve corresponds to a regularized maximization
problem, or specifically {\em random selection from exponential
  weights}.  For this problem the baiases of the $m$ Bernoulli input
distributions correspond to the weights of the outcomes.  The goal is
to produce a random $i$ from $X$ according to the distribution given
by exponential weights, i.e., the probability of selecting $i$ from
$X$ is $e^{\lambda p_i}/ \sum_j e^{\lambda p_j}$.

\subsection{Random Selection with Linear Weights}
\label{sec:bernoulli-race}

\begin{definition}[\textbf{Random Selection with Linear Weights}]
\label{def:ber-race}
The {\em random selection with linear weights} problem is to sample
from the probability distribution $f(\vbf)$ defined by $\prob[I \sim
  f(\vbf)]{I = i} = v_i / \sum_j v_j$ for each $i$ in $\{1,\ldots,m\}$
with only sample access to distributions with expectations $\vbf =
(v_1,\ldots,v_m)$.
\end{definition}

We solve the random selection with linear weights problem by an
algorithm that we call the {\em Bernoulli race}
(Algorithm~\ref{alg:ber-race1}).  The algorithm repeatedly picks a
coin uniformly at random and flips it. The winning coin is the first
one to come up heads in this process.

\begin{algorithm}[ht]
\small
\algblock[Name]{Start}{End}
\algblockdefx[NAME]{START}{END}%
[2][Unknown]{Start #1(#2)}%
{Ending}
\algblockdefx[NAME]{}{OTHEREND}%
[1]{Until (#1)}
 \caption{ Bernoulli Race
     \label{alg:ber-race1}}
      \begin{algorithmic}[1]
        \State{\textbf{input}}~ sample access to $m$ coins with biases $v_1,\ldots,v_m$.
        \Loop
        \State Draw $I$ uniformly from $\{1,\ldots,m\}$ and draw $P$ from input distribution $I$.  
        \State If $P$ is $\texttt{heads}$ then output $I$ and halt.
        \EndLoop
      \end{algorithmic}
\end{algorithm}

\begin{theorem}
\label{thm:ber-race}
The Bernoulli Race (Algorithm~\ref{alg:ber-race1}) samples with linear
weights (Definition~\ref{def:ber-race}) with an expected
${m}/{\sum_i v_i}$ samples from input distributions with
biases $v_1,\ldots,v_n$.
\end{theorem}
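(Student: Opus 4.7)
The plan is to analyze the Bernoulli Race as a standard rejection sampling procedure, computing per-iteration success probabilities and conditional output probabilities, and then invoking the memorylessness of independent repeated trials to obtain both claims simultaneously.

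First I would examine a single iteration of the loop. The index $I$ is drawn uniformly from $\{1,\ldots,m\}$, and then a single sample is drawn from input distribution $I$, which (since the inputs are coins with biases $v_1,\ldots,v_m$) comes up $\texttt{heads}$ with probability $v_I$. Thus the joint probability that a given iteration halts and outputs $i$ equals $(1/m)\cdot v_i$. Summing over $i$, the probability that the iteration halts at all equals $q := \sum_j v_j / m$, while conditional on halting, the probability that the output is $i$ equals $(v_i/m) / q = v_i / \sum_j v_j$, matching Definition~\ref{def:ber-race}.

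Second I would observe that successive iterations are mutually independent because each uses freshly drawn samples from the input distributions (samples from Bernoulli factories/expectations-from-samples oracles are independent by assumption). Hence the number $T$ of iterations until a halt is a geometric random variable with success probability $q$, so $\expect{T} = 1/q = m / \sum_j v_j$. Moreover, by the same independence, conditioning on the halting iteration does not alter the within-iteration conditional distribution computed above, so the overall output distribution is indeed $\prob{I=i} = v_i/\sum_j v_j$. Since each iteration consumes exactly one sample from the input distributions, the expected number of samples equals $\expect{T} = m/\sum_j v_j$, establishing both assertions.

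There is no real obstacle here; the only subtlety worth stating explicitly is that the algorithm halts almost surely, which follows from $q > 0$ whenever at least one $v_j > 0$, ensuring the geometric distribution is well-defined and has finite mean. The argument is a clean instance of the identity ``rejection sampling produces the correctly reweighted distribution, with expected runtime equal to the reciprocal of the per-trial acceptance probability.''
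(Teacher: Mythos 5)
Your proposal is correct and follows essentially the same route as the paper: both compute the per-iteration termination probability $\frac{1}{m}\sum_j v_j$, deduce that the number of iterations is geometric with expectation $m/\sum_j v_j$, and show the output distribution is $v_i/\sum_j v_j$ (the paper by explicitly summing the geometric series, you by the equivalent conditioning-on-acceptance argument). No gaps.
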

\proof
At each iteration, the algorithm terminates if the flipped coin
outputs $1$ and iterates otherwise.  Since the coin is chosen
uniformly at random, the probability of termination at each iteration
is $\frac{1}{m}\sum_{i}v_i$. The total number of iterations (and
number of samples) is therefore a geometric random variable with
expectation ${m}/{\sum_{i}v_i}$.
%  The total number of iterations is therefore a geometric random variable conditioned on not having terminated beforehand. Therefore, the number of iter one can look at the total number of iterations of the algorithm as a geometric random variable with parameter $\frac{\sum_{i\in [m]}v_i}{m}$, and so we have:
% \begin{equation}
% \ex{}{\textrm{number of sample queries}}=\ex{}{\textrm{number of iterations before termination}}=\frac{m}{\sum_{i\in [m]}v_i}.
% \end{equation}

The selected outcome also follows the desired distribution, as shown below.
\begin{align*}
\prob{\text{$i$ is selected}} &= \sum_{k=1}^\infty \prob{\text{$i$ is selected at time $k$}}\,\prob{\text{algorithm reaches time $k$}}\\
&=\frac{v_i}{m}\sum_{k=1}^\infty \left (1-\frac{1}{m}\sum\nolimits_j v_j\right)^{k-1}=\frac{\frac{v_i}{m}}{\frac{1}{m}\sum\nolimits_j v_j} = \frac{v_i}{\sum\nolimits_{j} v_j}.\qedhere
\end{align*}
\endproof

\subsection{Random Selection with Exponential Weights}
\label{sec:logistic}

\begin{definition}[\textbf{Random Selection with Exponential Weights}] 
\label{defn:log-ber-race}
For parameter $\lambda > 0$, the {\em random selection with
  exponential weights} problem is to sample from the probability
distribution $f(\vbf)$ defined by $\prob[I \sim f(\vbf)]{I = i} =
\exp(\lambda v_i) / \sum_j \exp(\lambda v_j)$ for each $i$ in
$\{1,\ldots,m\}$ with only sample access to distributions with
expectations $\vbf = (v_1,\ldots,v_m)$.
\end{definition}

The {\em Basic Exponential Bernoulli Race}, below, samples from the
exponential weights distribution.  The algorithm follows the paradigm
of picking one of the input distributions, exponentiating it, sampling
from the exponentiated distribution, and repeating until one comes up
heads.  While this algorithm does not generally run in polynomial
time, it is a building block for one that does.

\begin{algorithm}[ht]
\small
\algblock[Name]{Start}{End}
\algblockdefx[NAME]{START}{END}%
[2][Unknown]{Start #1(#2)}%
{Ending}
\algblockdefx[NAME]{}{OTHEREND}%
[1]{Until (#1)}
 \caption{The Basic Exponential Bernoulli Race (with parameter $\lambda > 0$)
     \label{alg:ber-lograce-nonpoly}}
      \begin{algorithmic}[1]
        \State{\textbf{input}}~ Sample access to $m$ coins with biases
        $v_1,\ldots,v_m$.  \State For each $i$, apply Bernoulli
        Exponentiation to coin $i$ to produce coin with bias
        $\tilde{v}_i = \exp(\lambda(v_i-1))$.  \State Run the
        Bernoulli Race on the coins with biases $\tilde{\mathbf v} =
        (\tilde{v}_1,\ldots,\tilde{v}_m)$.
      \end{algorithmic}
\end{algorithm}

\begin{theorem}
\label{thm:ber-lograce-nonpoly}
The Basic Exponential Bernoulli Race (Algorithm~\ref{alg:ber-lograce-nonpoly})
samples with exponential weights (Definition~\ref{defn:log-ber-race})
with an expected $\lambda m e^{\lambda (1-\vmax)}$ 
samples from input distributions with biases $v_1,\ldots,v_n$ and $\vmax = \max_i v_i$.
\end{theorem}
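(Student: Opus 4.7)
The plan is to verify correctness and then bound the expected sample complexity by chaining the bounds for Bernoulli Exponentiation and the Bernoulli Race.

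\textbf{Correctness.} First I would check that running the Bernoulli Race on the exponentiated coins produces the correct distribution. By Theorem~\ref{thm:ber-race}, the race outputs $i$ with probability $\tilde v_i / \sum_j \tilde v_j$. Substituting $\tilde v_i = \exp(\lambda(v_i-1))$, the common factor $e^{-\lambda}$ cancels in numerator and denominator, giving $e^{\lambda v_i}/\sum_j e^{\lambda v_j}$, which is exactly the distribution required by Definition~\ref{defn:log-ber-race}.

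\textbf{Sample complexity.} Next I would analyze the expected number of samples from the original $v_i$-coins. The total cost decomposes into (i) the expected number $N$ of iterations of the Bernoulli Race on the exponentiated coins, and (ii) the expected number of original-coin samples consumed to produce a single exponentiated-coin sample inside one iteration. By Theorem~\ref{thm:ber-race}, $\Ex{}{N} = m / \sum_i \tilde v_i$. By the Bernoulli Exponentiation building block (item~4 in Section~\ref{sec:prelim-bf}), each flip of coin $\tilde v_i$ consumes in expectation $\lambda$ flips of the underlying $v_i$-coin. Since each iteration of the race picks an index and then requests a single flip of that exponentiated coin, the expected number of original-coin samples per iteration is $\lambda$, independent of the iteration index. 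Applying Wald's identity (or equivalently the tower rule, conditioning on $N$), the expected total number of samples is
\begin{equation*}
\Ex{}{N}\cdot \lambda \;=\; \frac{\lambda m}{\sum_i \tilde v_i} \;=\; \frac{\lambda m}{\sum_i e^{\lambda(v_i-1)}}.
\end{equation*}

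\textbf{Final bound.} Finally I would lower-bound the denominator by keeping only the largest term: $\sum_i e^{\lambda(v_i-1)} \geq e^{\lambda(\vmax-1)}$. Substituting yields the stated bound
\begin{equation*}
\frac{\lambda m}{e^{\lambda(\vmax-1)}} \;=\; \lambda m \, e^{\lambda(1-\vmax)}.
\end{equation*}

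\textbf{Main obstacle.} The only subtle point is the step combining the two expectations: the number of Bernoulli Race iterations is a random stopping time, and within each iteration the number of original-coin samples is also random, so linearity has to be applied through Wald's identity rather than naively. Once that is handled, the remainder is a direct substitution and a one-line lower bound on the denominator. Tightness of the bound (to motivate why one needs the ``Fast'' version in the next subsection) comes from noticing the exponential blow-up factor $e^{\lambda(1-\vmax)}$ when $\vmax$ is small, but proving that factor is unavoidable for this particular algorithm is not required for the statement of this theorem.
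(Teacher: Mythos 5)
Your proof is correct and takes the same route as the paper, whose own proof is a one-line assertion that correctness and runtime follow from the guarantees of Bernoulli Exponentiation and the Bernoulli Race. You supply exactly the details the paper leaves implicit — the cancellation of $e^{-\lambda}$ for correctness, the Wald's-identity composition of the two expectations, and the lower bound $\sum_i e^{\lambda(v_i-1)} \geq e^{\lambda(\vmax-1)}$ — all of which are accurate.
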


\proof
The correctness and runtime follows from the correctness and runtimes
of Bernoulli Exponentiation and the Bernoulli Race.
\endproof

\subsection{The Fast Exponential Bernoulli Race}

\label{sec:fast-exponential-bernoulli-race}

Sampling from exponential weights is typically used as a ``soft
maximum'' where the parameter $\lambda$ controls how close the
selected outcome is to the true maximum.  For such an application,
exponential dependence on $\lambda$ in the runtime would be
prohibitive.  Unfortunately, when $\vmax$ is bounded away from one,
the runtime of the Basic Logistic Bernoulli Race
(Algorithm~\ref{alg:ber-lograce-nonpoly};
Theorem~\ref{thm:ber-lograce-nonpoly}) is exponential in $\lambda$.  A
simple observation allows the resolution of this issue: the
exponential weights distribution is invariant to any uniform additive
shift of all weights.  This section applies this idea to develop the
{\em Fast Logistic Bernoulli Race}.

Observe that for any given parameter $\epsilon$, we can easily
implement a Bernoulli random variable $Z$ whose bias $z$ is within an
additive $\epsilon$ of $\vmax$. Note that, unlike the other
algorithms in this section, a precise relationship between $z$ and
$v_1,\ldots,v_m$ is not required.
\begin{lemma}\label{lem:vmax-est}
For parameter $\epsilon \in (0,1]$, there is an algorithm for sampling
  from a Bernoulli random variable with bias $z \in [\vmax - \epsilon, \vmax + \epsilon]$, where $\vmax=\max_i v_i$, with $O(\frac{m}{\epsilon^2} \cdot
  \log(\frac{m}{\epsilon}))$ samples from input distributions with
  biases $v_1,\ldots,v_m$.
\end{lemma}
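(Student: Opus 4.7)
The plan is to estimate every $v_i$ by a sample mean accurate to within $\epsilon/2$ with high probability, output a coin whose bias is the maximum of those estimates, and then argue that the \emph{unconditional} bias of this coin lies within $\epsilon$ of $\vmax$.

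Concretely, I would proceed as follows. First, for each $i \in \{1,\ldots,m\}$, draw $N = \lceil 2\epsilon^{-2}\log(4m/\epsilon)\rceil$ independent samples from input distribution $i$ and let $\hat v_i$ be their empirical mean; this uses $mN = O(m\epsilon^{-2}\log(m/\epsilon))$ samples total. Since the samples lie in $[0,1]$ and have mean $v_i$, Hoeffding's inequality gives $\Pr[|\hat v_i - v_i| > \epsilon/2] \le 2\exp(-N\epsilon^2/2) \le \epsilon/(2m)$, and a union bound over the $m$ coordinates yields an event $E$ with $\Pr[E] \ge 1 - \epsilon/2$ on which $|\hat v_i - v_i| \le \epsilon/2$ for all $i$ simultaneously. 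Next, set $\hat v_{\max} = \max_i \hat v_i$ and output a $\bern{\hat v_{\max}}$ coin; this last step is trivial because after drawing the samples, the value $\hat v_{\max} \in [0,1]$ is known exactly.

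The output is a Bernoulli whose (unconditional) bias is $z = \expect{}{\hat v_{\max}}$, where the expectation is over the $mN$ samples. The remaining step is to show $|z - \vmax| \le \epsilon$. On event $E$ we have $|\hat v_{\max} - \vmax| \le \epsilon/2$ since the max operation is $1$-Lipschitz, while on $\bar E$ we can only say $|\hat v_{\max} - \vmax| \le 1$; therefore
\begin{equation*}
|z - \vmax| \;\le\; \expect{}{|\hat v_{\max} - \vmax|} \;\le\; \Pr[E]\cdot \tfrac{\epsilon}{2} + \Pr[\bar E] \cdot 1 \;\le\; \tfrac{\epsilon}{2} + \tfrac{\epsilon}{2} \;=\; \epsilon,
\end{equation*}
as required.

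The proof is essentially a concentration-plus-union-bound calculation, so I do not expect a real obstacle. The one subtlety worth being explicit about is that the lemma asks for a coin whose \emph{bias} lies in $[\vmax - \epsilon, \vmax + \epsilon]$, not merely a coin whose bias is close to $\vmax$ with high probability; this is why the final bound is phrased as a bound on $\expect{}{\hat v_{\max}}$ and why the failure event $\bar E$ must contribute at most $\epsilon/2$ to that expectation (which in turn sets the logarithmic factor in the sample complexity).
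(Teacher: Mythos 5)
Your proposal is correct and follows essentially the same route as the paper: sample each coin $O(\epsilon^{-2}\log(m/\epsilon))$ times, take the maximum $\hat v_{\max}$ of the empirical means, convert it to a coin (the paper's ``Continuous to Bernoulli'' step), and bound the unconditional bias $z=\expect{}{\hat v_{\max}}$ by splitting into the high-probability event where all estimates are within $\epsilon/2$ and the failure event of probability at most $\epsilon/2$. You are in fact slightly more explicit than the paper about why the failure event contributes at most $\epsilon/2$ to the expectation, which is exactly the subtlety that makes the bias (not just the high-probability estimate) land in $[\vmax-\epsilon,\vmax+\epsilon]$.
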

\proof
The algorithm is as follows: Sample $\frac{4}{\epsilon^2} \log
(\frac{4 m}{\epsilon})$ times from each of the $m$ coins, let
$\hat{v}_i$ be the empirical estimate of coin $i$'s bias obtained by
averaging, then apply the Continuous to Bernoulli algorithm
(Section~\ref{sec:prelim-bf}) to map $\hat{v}_{\max} = \max_i
\hat{v}_i$ to a Bernoulli random variable.

Standard tail bounds (e.g., Chernoff-Hoeffding bound) imply that $|\hat{v}_{\max} - \vmax| <
\epsilon/2$ with probability at least $1-\epsilon/2$, and therefore
$z= \Ex{}{\hat{v}_{\max}} \in [\vmax - \epsilon, \vmax + \epsilon]$.
\endproof

Since we are interested in a fast logistic Bernoulli race as $\lambda$
grows large, we restrict attention to $\lambda > 4$.  We set $\epsilon
= 1/\lambda$ in the estimation of $\vmax$ (by
Lemma~\ref{lem:vmax-est}). This estimate will be used to boost the
bias of each distribution in the input so that the maximum bias is at
least $1-3\epsilon$.  The boosting of the bias is implemented with
Bernoulli Addition which, to be fast, requires the cumulative bias be
bounded away from one.  Thus, the probabilities are scaled down by a
factor of $1-2\epsilon>1/2$ (due to the fact that $\lambda>4$); this scaling is subsequently counterbalanced
by adjusting the parameter $\lambda$.  The formal details are given
below.

\begin{algorithm}[ht]
\small
\algblock[Name]{Start}{End}
\algblockdefx[NAME]{START}{END}%
[2][Unknown]{Start #1(#2)}%
{Ending}
\algblockdefx[NAME]{}{OTHEREND}%
[1]{Until (#1)}
 \caption{ Fast Exponential Bernoulli Race  (with parameter $\lambda > 4$) 
     \label{alg:ber-lograce-poly}}
      \begin{algorithmic}[1]
        \State{\textbf{input}}~ Sample access to $m$ coins with biases $v_1,\ldots,v_m$.
        \State Let $\epsilon = 1/\lambda$.

        \State Construct a coin with bias $z \in [\vmax - \epsilon,
          \vmax + \epsilon]$ (from
        Lemma~\ref{lem:vmax-est}). \label{z1}

        \State Apply Bernoulli Down Scaling 
        to a coin with bias $1-z$ to implement a coin with bias
        $(1-2\epsilon)(1-z)$. \label{z2}

        \State For all $i$, apply Bernoulli Down Scaling to implement
        a coin with bias $(1-2 \epsilon) v_i$. \label{v2}

        \State For all $i$, apply Bernoulli Addition to implement coin
        with bias $v'_i = (1-2\epsilon) v_i +(1-2 \epsilon)
        (1-z)$. \label{mutlivar}

        \State Run the Basic Exponential Bernoulli Race
        with parameter
        $\lambda' = \frac{\lambda}{1-2 \epsilon}$ on the coins with bias $v'_1,\ldots,v'_m$.
\label{last}
      \end{algorithmic}
\end{algorithm}
\begin{theorem}
\label{thm:ber-lograce-poly}
The Fast Exponential Bernoulli Race (Algorithm~\ref{alg:ber-lograce-poly})
samples with exponential weights (Definition~\ref{defn:log-ber-race})
with an expected $O(\lambda^4 m^2 \log(\lambda m))$
samples from the input distributions.
\end{theorem}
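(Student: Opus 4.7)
My plan is to prove the theorem in two pieces: correctness, via an exact cancellation of the shift-and-rescale applied to each bias, and expected runtime, by chaining the expected sample complexities of the three nested sampling layers. For correctness, I will invoke Theorem~\ref{thm:ber-lograce-nonpoly} on the transformed coins with biases $v'_i = (1-2\epsilon)(v_i + 1 - z)$ (Step~\ref{mutlivar}) and parameter $\lambda' = \lambda/(1-2\epsilon)$. That theorem returns $i$ with probability proportional to $\exp(\lambda' v'_i) = \exp(\lambda (v_i + 1 - z))$; because the factor $\exp(\lambda(1-z))$ does not depend on $i$, it cancels in the normalization and leaves probability proportional to $\exp(\lambda v_i)$, matching Definition~\ref{defn:log-ber-race}. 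This cancellation holds pointwise in $z$, so the randomness in $z$ coming from Step~\ref{z1} does not affect correctness.

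For the runtime, I will set $\epsilon = 1/\lambda$ and use Lemma~\ref{lem:vmax-est} to fix $z \in [\vmax - \epsilon, \vmax + \epsilon]$, then bound the cost at each of three nested layers. At the outer layer (Step~\ref{last}), since $v'_{\max} = (1-2\epsilon)(\vmax + 1 - z) \geq (1-2\epsilon)(1-\epsilon)$, a short computation gives $1 - v'_{\max} \leq 3\epsilon$, whence $\lambda'(1 - v'_{\max}) \leq 3\lambda\epsilon/(1-2\epsilon) = O(1)$ for $\lambda > 4$; by Theorem~\ref{thm:ber-lograce-nonpoly} the expected number of $v'_i$-samples drawn is $O(\lambda m)$. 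At the middle layer (Step~\ref{mutlivar}), each $v'_i$-sample is generated by one Bernoulli Addition call whose cumulative bias $v'_i$ is at most $(1-2\epsilon)(1+\epsilon) \leq 1 - \epsilon$, so Bernoulli Addition consumes $O(1/\epsilon) = O(\lambda)$ samples from the two scaled coins. At the inner layer, each sample of the $\bern{(1-2\epsilon)(1-z)}$ coin (Step~\ref{z2}) reduces via Bernoulli Down Scaling to one $\bern{z}$ flip, which by Lemma~\ref{lem:vmax-est} costs $O(m/\epsilon^2 \log(m/\epsilon)) = O(m\lambda^2 \log(\lambda m))$ input samples. Multiplying the three layers yields $O(\lambda m) \cdot O(\lambda) \cdot O(m\lambda^2 \log(\lambda m)) = O(\lambda^4 m^2 \log(\lambda m))$, with samples drawn on the other branch of Bernoulli Addition directly from the $v_i$ coins contributing only $O(\lambda^2 m)$, a lower-order term.

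The hard part will be justifying that the $\bern{z}$ flips consumed by the downstream subroutines may be treated as independent samples of a single coin. The cleanest resolution will be to re-invoke Lemma~\ref{lem:vmax-est} from scratch for each $\bern{z}$ sample required; with freshly drawn input samples each time, the resulting flips are marginally IID with bias $z = \Ex{}{\hat{v}_{\max}}$, and this choice is precisely what drives the $O(\lambda^4 m^2 \log(\lambda m))$ bound rather than a smaller additive cost. A secondary check is that the two margin bounds I use, $v'_{\max} \geq 1 - 3\epsilon$ for the outer layer and $v'_i \leq 1 - \epsilon$ for the middle layer, both follow from the single $\pm\epsilon$ tolerance on $z$ by taking the sign of the deviation that gives the tightest worst-case bound on each side.
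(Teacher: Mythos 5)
Your proposal is correct and follows essentially the same route as the paper's proof: the same shift-and-rescale cancellation $\lambda' v'_i = \lambda(v_i+1-z)$ for correctness, and the same three-layer multiplication $O(\lambda m)\cdot O(\lambda)\cdot O(\lambda^2 m\log(\lambda m))$ for the runtime, with identical margin bounds $v'_i\leq 1-\epsilon$ and $v'_{\max}\geq 1-3\epsilon$. Your explicit remark that each $\bern{z}$ flip is produced by a fresh invocation of Lemma~\ref{lem:vmax-est} (so the flips are i.i.d.\ with the deterministic bias $z=\Ex{}{\hat{v}_{\max}}$) makes precise a point the paper leaves implicit in its ``per sample'' accounting, but it is the same argument.
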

\proof
The correctness and runtime follows from the correctness and runtimes
of the Basic Exponential Bernoulli Race, Bernoulli Doubling,
Lemma~\ref{lem:vmax-est} (for estimate of $\vmax$), and the fact
$\lambda' v'_i = \lambda (v_i + 1-z)$ and the distribution given by
exponential weights is invariant to additive shifts of all weights.

A detailed analysis of the runtime follows.  Since the algorithm
builds a number of sampling subroutines in a hierarchy, we analyze the
runtime of the algorithm and the various subroutines in a bottom up
fashion.  Steps~\ref{z1} and~\ref{z2} implement a coin with bias
$(1-2\epsilon)(1-z)$ with runtime $O(\lambda^2 m \cdot \log(\lambda
m))$ per sample, as per the bound of Lemma~\ref{lem:vmax-est}. The
coin implemented in Step~\ref{v2} is sampled in constant time. Observe
that $v'_i \leq (1-2 \epsilon) (1+v_i - \vmax + \epsilon) \leq 1-
\epsilon$, and the runtime of Bernoulli Doubling implies that
$O(\lambda)$ samples from the coins of Steps \ref{z2} and \ref{v2}
suffice for sampling $\bern{v'_i}$; we conclude that a $v'_i$-coin can
be sampled in time $O(\lambda^3 m \cdot \log(\lambda m))$. Finally,
note that for $\vmax' = \max_i v'_i$, we have $\vmax' \geq 1-3
\epsilon$; Theorem~\ref{thm:ber-lograce-nonpoly} then implies that the
Basic Exponential Bernoulli Race samples at most $\lambda' m
e^{\lambda'\,3\epsilon} \leq 2 e^6 \lambda m = O(\lambda m)$ times from
the $\vbf'$-coins; we conclude the claimed runtime.  
\endproof

\section{The Single-Agent Multiple-Urns Problem}
\label{sec:single}

We investigate incentive compatible mechanism design for the
\emph{single-agent multiple-urns} problem.  Informally, the mechanism needs to assign an agent to one of many urns.  Each urn contains
objects and the agent's value for being assigned to an urn is taken in
expectation over objects from the urn.  The problem asks for an
incentive compatible mechanism with good welfare (i.e., the value of
the agent for the assigned urn).

\subsection{Problem Definition and Notations}
% types, outcomes, utility
A single agent with type $t$ from type space $\typeS{}$ desires an
object $o$ from outcome space $\outS$.  The agent's value for an
outcome $o$ is a function of her type $t$ and denoted by $v(t,o) \in
[0,1]$.  The agent is a risk-neutral quasi-linear utility maximizer
with utility $\expect[o]{v(t,o)} - p$ for randomized outcome $o$  and expected payment $p$.
%
% urns.
There are $m$ urns.  Each urn $j$ is given by a distribution $\Di{j}$
over outcomes in $\outS$.  If the agent is assigned to urn $j$ she
obtains an object from the urn's distribution $\Di{j}$.

% mechanims and objective
A mechanism can solicit the type of the agent (who may misreport if
she desires).  We further assume (1)~the mechanism has black-box access to evaluate $v(t,o)$ for any type $t$ and outcome $o$, (2)~the mechanism has sample access to
the distribution $\Di{j}$ of each urn $j$.  The mechanism may draw
objects from urns and evaluate the agent's reported value for these
objects, but then must ultimately assign the agent to a single urn and
charge the agent a payment.  The urn and payment that the agent is
assigned are random variables in the mechanism's internal
randomization and randomness from the mechanisms potential samples
from the urns' distributions.  

The distribution of the urn the mechanism assigns to an agent, as a
function of her type $t$, is denoted by $\xbf(t) =
(x_1(t),\dots,x_m(t))$ where $x_j(t)$ is the marginal probability that
the agent is assigned to urn $j$.  Denote the expected value of the
agent for urn $j$ by $v_j(t) = \expect[o\sim\Di{j}]{v(t,o)}$.  The
expected welfare of the mechanism is $\sum_j v_j(t)\, x_j(t)$.  The
expected payment of this agent is denoted by $p(t)$.  The agent's
utility for the outcome and payment of the mechanism is given by
$\sum_j v_j(t)\,x_j(t) - p(t)$.  Incentive compatibility is defined by
the agent with type $t$ preferring her outcome and payment to that
assigned to another type $t'$.

\begin{definition}
\label{d:IC}
A single-agent mechanism $(\xbf,p)$ is {\em incentive compatible} if,
for all $t,t' \in \typeS{}$:
\begin{align}
\label{eq:single-agent-ic}
 \sum\nolimits_j v_j(t)\, x_j(t) - p(t)
 &\geq  
\sum\nolimits_j v_j(t)\, x_j(t') - p(t')
\end{align}
A multi-agent mechanism is Bayesian Incentive Compatible (BIC) if
equation~\eqref{eq:single-agent-ic} holds for the outcome of the
mechanism in expectation over the truthful reports of the other agents.
\end{definition}

\subsection{Incentive Compatible Approximate Scheme}

If the agent's expected value for each urn
is known, or equivalently mechanism designer knows the distributions $\Di{j}$ for all urns $j$ rather than only sample access, this problem would be easy and  admits a trivial optimal mechanism: simply select the urn maximizing the agent's expected value $v_j(t)$  according to her reported type $t$, and charge her a payment of zero. What makes this problem
interesting is that the designer is restricted to only \emph{sample} the
agent's value for an urn. 
In this case, the following Monte-carlo adaptation of the trivial mechanism is tempting: sample from each urn sufficiently many times to obtain a close estimate $\tilde{v}_j(t)$ of $v_j(t)$ with high probability (up to any desired precision $\delta>0$), then choose the urn $j$ maximizing  $\tilde{v}_j(t)$ and charge a payment of zero. This mechanism is not incentive compatible, as illustrated by a simple example.

\begin{example}
  Consider two urns. Urn $A$ contains only outcome $o_2$, whereas $B$ two contains a mixture of outcomes $o_1$ and $o_3$, with $o_1$ slightly more likely than $o_3$. Now consider an agent who has (true) values $0$, $1$, and $2$ for outcomes $o_1$, $o_2$, and $o_3$ respectively. If this agent reports her true type, the trivial Monte-carlo mechanism --- instantiated with any desired finite degree of precision --- assigns her urn $A$ most of the time, but assigns her urn $B$ with some nonzero probability. The agent gains by misreporting her value of outcome $o_3$ as $0$, since this guarantees her preferred urn $A$.
\end{example}

The above example might seem counter-intuitive, since the trivial Monte-carlo mechanism appears to be doing its best to maximize the agent's utility, up to the limits of (unavoidable) sampling error. One intuitive rationalization is the following: an agent can slightly gain by procuring (by whatever means) more precise information about the distributions $\Di{j}$ than that available to the mechanism, and using this information to guide her strategic misreporting of her type. This raises the following question:

\paragraph{Question:} \emph{Is there an incentive-compatible mechanism for the single-agent multiple-urns problem which achieves welfare within $\epsilon$ of the optimal, and samples only  $poly(m,\frac{1}{\epsilon})$ times (in expectation) from the urns?}

We resolve the above question in the affirmative. We present approximation scheme for this problem that is
based on our solution to the problem of random selection with
exponential weights (Section~\ref{sec:logistic}).  The solution to the
single-agent multiple-urns problem is a main ingredient in the
Bayesian mechanism that we propose in
Section~\ref{sec:exact-BIC-reduction} as our black-box reduction mechanism.

To explain the approximate scheme, we start by recalling the following standard theorem in mechanism design (e.g., see \citet{G-73} and \citet{NR-01}).

\begin{theorem}
\label{t:IC}
For outcome rule $\xbf$, there exists payment rule $p$ so that
single-agent mechanism $(\xbf,p)$ is incentive compatible if and only
if $\xbf$ is maximal in range, i.e., 
$\xbf(t) \in \argmax_{\xbf'}\sum_j v_j(t)\, x_j' - c(\xbf')$, 
for some cost function $c(\cdot)$.
\end{theorem}
\begin{remark}
The ``only if'' direction of this theorem requires that the type space $\typeS{}$ be rich enough so that the induced space of values across the urns is 
$\{(v_1(t),\ldots,v_m(t)) : t \in \typeS{}\} = [0,1]^m$.
\end{remark}

The payments that satisfy Theorem~\ref{t:IC} can be easily calculated
with black-box access to outcome rule $\xbf(\cdot)$.  For a
single-agent problem, this payment can be calculated in two calls to
the function $\xbf(\cdot)$, one on the agent's reported type $t$ and
the other on a type randomly drawn from the path between the origin
and $t$.  Further discussion and details are given later in
Section~\ref{sec:implicitpayments}.  It suffices, therefore, to
identify a mechanism that samples from urns and assigns the agent to
an urn that induces an outcome rule $\xbf(\cdot)$ that is good for
welfare, i.e., $\sum_i v_j(t)\,x_j(t)$, and is maximal in range.  The
following theorem solves the problem.

\begin{theorem}
\label{thm:samu}
There is an incentive-compatible mechanism for the single-agent
multiple-urns problem which achieves an additive
$\epsilon$-approximation to the optimal welfare in expectation, and
runs in time $O(m^2 (\frac{\log m}{\epsilon})^5 )$ in expectation.
\end{theorem}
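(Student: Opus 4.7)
The plan is to instantiate the allocation rule using the Fast Exponential Bernoulli Race of Section~\ref{sec:fast-exponential-bernoulli-race}, choose the temperature parameter $\lambda$ so as to trade off welfare loss against sample complexity, and then read off incentive compatibility from the standard maximal-in-range characterization (Theorem~\ref{t:IC}).

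Concretely, on reported type $t$ the mechanism first obtains sample access to each urn's value coin by using the Continuous-to-Bernoulli reduction: to draw a $v_j(t)$-coin, sample $o \sim \Di{j}$, evaluate $v(t,o) \in [0,1]$ via the black-box value oracle, and then flip a biased coin with bias $v(t,o)$. Feeding these $m$ coins into the Fast Exponential Bernoulli Race with parameter $\lambda = (\log m)/\epsilon$ produces an urn $I$ with $\prob{I=j} = e^{\lambda v_j(t)}/\sum_k e^{\lambda v_k(t)}$. This defines the allocation rule $\xbf(t)$, and we compute payments via the implicit-payments construction described after Theorem~\ref{t:IC} (and detailed in Appendix~\ref{Appendix:implicitpayments}), which requires only two additional evaluations of $\xbf(\cdot)$.

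The key step is verifying that $\xbf$ is maximal-in-range. It is a standard identity that
\begin{equation*}
\xbf(t) = \argmax_{\xbf' \in \Delta(\{1,\dots,m\})} \Bigl\{ \sum\nolimits_j v_j(t)\, x'_j + \tfrac{1}{\lambda} H(\xbf') \Bigr\},
\end{equation*}
so with cost function $c(\xbf') = -\tfrac{1}{\lambda} H(\xbf')$ the hypothesis of Theorem~\ref{t:IC} is satisfied and the computed payments make the mechanism incentive compatible. Welfare loss is then bounded by plugging the degenerate distribution at $\argmax_j v_j(t)$ into the right-hand side of the above display: because $H$ is nonnegative and upper-bounded by $\log m$,
\begin{equation*}
\sum\nolimits_j v_j(t)\, x_j(t) \;\geq\; \vmax(t) - \tfrac{\log m}{\lambda} \;=\; \vmax(t) - \epsilon,
\end{equation*}
which gives the desired additive $\epsilon$-approximation to the optimal welfare $\vmax(t)$.

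For the runtime, Theorem~\ref{thm:ber-lograce-poly} says the Fast Exponential Bernoulli Race uses $O(\lambda^4 m^2 \log(\lambda m))$ samples in expectation; substituting $\lambda = (\log m)/\epsilon$ and absorbing the $\log(\lambda m)$ factor into an extra $\log m/\epsilon$ yields the claimed $O(m^2 (\log m/\epsilon)^5)$ bound, and since payment computation only invokes $\xbf$ a constant number of times the total runtime is of the same order. The main subtlety I would expect to stumble on is the maximal-in-range argument in the continuous-type setting alluded to in the footnote to Theorem~\ref{t:IC}: one must check that the exponential-weights allocation rule, viewed as a function of the induced value vector $(v_1(t),\dots,v_m(t)) \in [0,1]^m$, is indeed the pointwise maximizer of a valuation-plus-cost objective on the full cube, which it is because the regularizer $-H(\xbf')/\lambda$ depends only on the allocation and not on $t$.
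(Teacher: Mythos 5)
Your proposal is correct and follows essentially the same route as the paper: solve the entropy-regularized welfare maximization via the Fast Exponential Bernoulli Race with $\lambda = \log m/\epsilon$, obtain incentive compatibility from the maximal-in-range characterization with cost $c(\xbf') = -H(\xbf')/\lambda$, bound the welfare loss by the maximum entropy $\log m$, and read the runtime off Theorem~\ref{thm:ber-lograce-poly}. Your explicit welfare derivation (comparing against the degenerate distribution at the best urn) and the Continuous-to-Bernoulli step are slightly more detailed than the paper's proof but add nothing structurally different.
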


\proof
Consider the problem of selecting a distribution over urns to optimize
welfare plus (a scaling of) the Shannon entropy function, i.e.,
$\xbf(t) = \argmax_{\xbf'} v_j(t)\,x_j' - (1/\lambda) \sum_j x_j' \log
x_j'$. The additive entropy term can be interpreted as a
  negative cost vis-\`a-vis Theorem~\ref{t:IC}. It is well known
that the optimizer $\xbf(t)$ is given by exponential weights, i.e.,
the marginal probability of assigning the $j$th urn is given by $x_j(t) =
\exp(\lambda v_j(t)) / \sum_{j'} \exp(\lambda v_{j'}(t))$, a fact that can also be verified easily from the first-order conditions.  In
Section~\ref{sec:fast-exponential-bernoulli-race} we gave a polynomal
time algorithm for sampling from exponential weights, specifically,
the Fast Exponential Bernoulli Race
(Algorithm~\ref{alg:ber-lograce-poly}).  Proper choice of the
parameter $\lambda$ controls trades off faster run times with decreased
loss due to entropy term.  The entropy is maximized at the uniform
distribution $\xbf' = (1/m,\ldots,1/m)$ with entropy $\log m$.  Thus,
choosing $\lambda = \log m / \epsilon$ guarantees that the welfare is
within an additive $\epsilon$ of the optimal welfare $\max_j v_j(t)$.
The bound of the theorem then follows from the analysis of the Fast
Exponential Bernoulli Race (Theorem~\ref{thm:ber-lograce-poly}) with
this choice of $\lambda$. 
\endproof

%%% Local Variables: 
%%% mode: latex
%%% TeX-master: "main"
%%% End: 

\section{A Bayesian Incentive Compatible Black-box Reduction}
\label{sec:exact-BIC-reduction}

A central question at the interface between algorithms and economics
is on the existence of black-box reductions for mechanism design.
Given black-box access to any algorithm that maps inputs to outcomes,
can a mechanism be constructed that induces agents to truthfully
report the inputs and produces an outcome that is as good as the one
produced by the algorithm?  The mechanism must be computationally
tractable, specifically, making no more than a polynomial number of
elementary operations and black-box calls to the algorithm. 

\subsection{Basics of Bayesian mechanism design}
 Before formalizing this problem, we provide further details on Bayesian mechanism design and our set of notations in this paper, which are mostly based on those in~\citet{HKM-15}.
 
 \label{Appendix:mechnotations}
 \subsubsection{Multi-parameter Bayesian setting.}Suppose there are $\nBidder$ agents, where agent $k$ has private \emph{type} $\type{k}$ from \emph{type space} $\typeS{k}$. The \emph{type profile} of all agents is denoted by $\typeP=(\type{1},\ldots,\type{n})\in \typeS{1}\times\ldots\times \typeS{n}$. Moreover, we assume types are drawn independently from known prior distributions. For agent $k$, let $\prior{k}$ be the distribution of $\type{k}\in\typeS{k}$ and $\priorJ=\prior{1}\times\ldots\times\prior{n}$ be the joint distribution of types. Suppose there is an \emph{outcome space} denoted by $\outS$. Agent $k$ with type $\type{k}$ has valuation $v(\type{k},o)$ for outcome $o\in \outS$, where $v:(\typeS{1}\cup\ldots\cup\typeS{n})\times \outS\rightarrow [0,1]$ is a fixed function. Note that we assume agent values are non-negative and bounded, and without loss of generality in $[0,1]$.  Finally, we allow charging agents with non-negative money \emph{payments} and we assume agents are \emph{quasi-linear}, i.e., an agent with private type $t$ has \emph{utility} $u=v(t,o)-p$ for the outcome-payment pair $(o,p)$.  
 
\subsubsection{Algorithms, mechanisms and interim rules.} An \emph{allocation algorithm} $\alloc$ is a mapping from type profiles $\typeP{}$ to outcome space $\outS$. A (direct revelation) mechanism $\mech$ is a pair of \emph{allocation rule} and \emph{payment rule} $(\alloc,\payP)$, in which $\alloc$ is an allocation algorithm and $\payP=(\pay{1},\ldots,\pay{n})$ where each $\pay{k}$ (denoted by the payment rule for agent $k$) is a mapping from type profiles $\typeP$ to $\mathbb{R}_{+}\cup\{0\}$. 

One can think of the interaction between strategic agents and a mechanism as following: agents submit their \emph{reported types} $\bidP=(\bid{1},\ldots,\bid{n})$ and then the mechanism $\mech$ picks the outcome $o=\alloc(\bidP)$ and charges each agent $k$ with its payment $\pay{k}(\bidP)$. We also consider \emph{interim allocation rule}, which is the allocation from the perspective of one agent when the other agent's reported types are drawn from their prior distribution. More concretely, we abuse notation and define $\alloc^k(s_k)\triangleq \alloc(s^{k},\typeO{-k})$ to be the distribution over outcomes induced by $\alloc$ when agent $k$'s type is $s^{k}$ and other agent types are drawn from $\priorJ^{-k}$. Similarly, for agent $k$ we define \emph{interim payment rule} $\pay{k}(s^{k})\triangleq\ex{\typeO{-k}\sim\priorJ^{-k}}{\pay{k}(s^{k},\typeO{-k})}$, and \emph{interim value} $v^k(s^{k})\triangleq \ex{\typeO{-k}\sim\priorJ^{-k}}{v(s^{k},\alloc^k(s^{k},\typeO{-k}))}$. In most parts of this paper,  we focus only on one agent, e.g. agent $k$, and we just work with the interim allocation algorithm $\alloc^{k}(.)$. When it is clear from the context, we drop the agent's superscript, and therefore  $\alloc(s)$ denotes the distribution over outcomes induced by $\alloc(s,\typeO{-k})$ when $\typeO{-k}\sim \priorJ^{-k}$.

\subsubsection{Bayesian and dominant strategy truthfulness.} We are only interested in designing mechanisms that are \emph{interim truthful}, i.e., every agent bests of by reporting her true type  assuming all other agent's reported types are drawn independently from their prior type distribution. More precisely, a mechanism $\mech$ is \emph{ Bayesian Incentive Compatible (BIC)} if for all agents $k$, and all types $s^k,\type{k}\in\typeS{k},$
\begin{equation}
\ex{\typeO{-k}\sim\priorJ^{-k}}{v(\type{k},\alloc^k(\type{k}))}-\pay{k}(\type{k})\geq \ex{\typeO{-k}\sim\priorJ^{-k}}{v(\type{k},\alloc^k(s^k))}-\pay{k}(s^k)
\end{equation}
As a stronger notion of truthfulness than Bayesian truthfulness, one can consider \emph{dominant strategy truthfulness}. More precisely, a mechanism $\mech$ is \emph{Dominant Strategy Incentive Compatible (DSIC)} if for all agents $k$, all types $s^k,\type{k}\in\typeS{k}$ and all type profiles $\typeO{-k}\in \typeS{-k}$, 
\begin{equation}
{v(\type{k},\alloc(\typeP))}-\pay{k}(\typeP)\geq {v(\type{k},\alloc(s^k,\typeO{-k}))}-\pay{k}(s^k,\typeO{-k})
\end{equation} 
Moreover, an allocation algorithm $\alloc$ is said to be BIC (or DSIC) if there exists a payment rule $\payP$ such that $M=({\alloc},{\payP})$ is a BIC (or DSIC) mechanism. Throughout the paper, we use the terms Bayesian (or dominant strategy) truthful and Bayesian  (or dominant strategy) incentive compatible interchangeably. For randomized mechanisms,  DSIC and BIC solution concepts are defined by considering expectation of utilities of agents over mechanism's internal randomness.

\subsubsection{Social welfare.} 
We are considering mechanism design for maximizing \emph{social welfare}, i.e. the sum of the utilities of agents and the mechanism designer. For quasi-linear agents, this quantity is in fact sum of the valuations of the agents under the outcome picked by the mechanism. For the allocation algorithm $\alloc$, we use the notation $\val({\alloc})$ for the expected welfare of this allocation and $\valk{k}(\alloc)$ for the expected value of agent $k$ under this allocation, i.e., $\val(\alloc)\triangleq\ex{\typeP\sim\priorJ}{\sum_{k}v(\type{k},\alloc(\typeP))}$ and $\valk{k}(\alloc)\triangleq \ex{\typeP\sim\priorJ}{v(\type{k},\alloc(\typeP))}$. 

\subsection{Bayesian black-box reductions} 

A line of research initiated by \citet{HL-10,HL-15} demonstrated that,
for the welfare objective, Bayesian black-box reductions can exist. The constructed mechanism is expected to be an approximation scheme; for any $\epsilon$ the reduction gives a mechanism that is Bayesian incentive
compatible (Definition~\ref{d:IC}) and obtains a welfare that is no smaller by an additive $\epsilon$ than the algorithm's welfare in expectation. More accurately, we define the following problem.

\begin{definition}[{BIC black-box reduction problem}]
Given black-box oracle access to an allocation algorithm $\alloc$, simulate a  Bayesian incentive compatible allocation algorithm $\tilde{\alloc}$ that approximately preserves welfare, i.e. for every agent $\texttt{a}$, $\valk{\texttt{a}}(\tilde{\alloc})\geq\valk{\texttt{a}}(\alloc)-\epsilon$, and runs in time $\textrm{poly}(n,\frac{1}{\epsilon})$.
\end{definition} 

In this literature,
\citet{HL-10,HL-15} solve the case of single-dimensional agents and
\citet{HKM-11,HKM-15} solve the case of multi-dimensional agents with
discrete type spaces.  For the relaxation of the problem where 
only approximate incentive compatibility is required, \citet{BH-11}
solve the case of multi-dimensional agents with discrete type space, and \citet{HKM-11,HKM-15} solve the general case by (1) achieving exact BIC for discrete type spaces, and (2) achieving approximate BIC for general multi-dimensional type spaces. These reductions
are approximation schemes that are polynomial in the number of agents,
the desired approximation factor, and a measure of the size of the
agents' type spaces (i.e., its dimension).

Notably, one could also consider approximately preserving objectives other than welfare. However, \citet{CIL-12} have shown that BIC black-box reductions for the makespan objective cannot be computationally efficient in general. As another important note, in the
Bayesian setting, agents' types are drawn from a distribution.  The
original algorithm ideally obtains good welfare for types drawn from this distribution in expectation; although this assumption is not necessary for the reduction to work, the black-box reduction in algorithmic mechanism design makes more sense when the algorithm is assumed to obtain good welfare in such a Bayesian sense.

\subsection{Surrogate Selection and the Replica-Surrogate Matching}
\label{sec:surrogate-selcetion}
A main conclusion of the literature on Bayesian blackbox reductions for
mechanism design is that the multi-agent problem of reducing Bayesian
mechanism design to algorithm design, itself, reduces to a
single-agent problem of {\em surrogate selection}.  Consider any agent
in the original problem and the {\em induced algorithm} with the
inputs form other agents hardcoded as random draws from their
respective type distributions.  The induced algorithm maps the type of
this agent to a distribution over outcomes.  If this distribution over
outcomes is maximal-in-range then there exists payments for which the
induced algorithm is incentive compatible (Theorem~\ref{t:IC}).  If
not, the problem of surrogate selection is to map the type of the
agent to an input to the algorithm to satisfy three properties:
\begin{itemize}
\item [(a)] The composition of surrogate selection and the induced algorithm is
maximal-in-range, 
\item [(b)] The composition approximately preserves welfare,
\item [(c)] The surrogate selection preserves the type distribution.
\end{itemize}
Condition (c), a.k.a. \emph{stationarity}, implies that fixing the
non-maximaility-of-range of the algorithm for a particular agent does
not affect the outcome for any other agents.  With such an approach
each agent's incentive problem can be resolved independently from that
of other agents.

\begin{theorem}[\citet{HKM-15}] 
The composition of an algorithm with a profile of surrogate selection
rules, that maps the profile of agent types to an input to the algorithm,
is Bayesian incentive compatible and approximately preserves the
algorithms welfare (the loss in welfare is the sum of the losses in
welfare of each surrogate selection rule).
\end{theorem}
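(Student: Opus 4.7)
The plan is to prove the two conclusions, Bayesian incentive compatibility and the additive welfare bound, separately, in each case using stationarity (condition~(c)) to decouple the problem across agents. The common observation underlying both arguments is that the surrogate selection rule for agent $i$ depends only on $i$'s own reported type, so when the other agents report truthfully, the joint distribution of their surrogates entering $\alloc$ is, by condition~(c), identical to the original prior product distribution on their types.

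For BIC, I would fix an arbitrary agent $i$, condition on all other agents reporting truthfully, and examine the map from $i$'s report to the outcome distribution of the composed mechanism. By the observation above, the ``induced algorithm'' that $i$ faces after integrating out the others' (surrogate) inputs is distributionally identical to the induced algorithm that appears in condition~(a). That condition states that $i$'s surrogate rule composed with this induced algorithm is maximal-in-range from $i$'s perspective, so Theorem~\ref{t:IC} supplies payments under which $i$ strictly prefers truthful reporting to any misreport. Performing this argument independently for each $i$ gives Bayesian incentive compatibility in the sense of Definition~\ref{d:IC}, with payments that may depend on the other agents' reports.

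For welfare, I would run a hybrid argument. Let $\typeP=(\type{1},\dots,\type{n})$ be the truthful profile and $\bidP=(\bid{1}(\type{1}),\dots,\bid{n}(\type{n}))$ the induced surrogate profile, and consider the hybrid inputs $H_k=(\bid{1}(\type{1}),\dots,\bid{k}(\type{k}),\type{k+1},\dots,\type{n})$ fed into $\alloc$. Passing from $H_{k-1}$ to $H_k$ swaps only agent $k$'s coordinate from truthful to surrogate; by stationarity the joint distribution of every other coordinate is unaffected, so the change in expected social welfare is exactly the expected welfare loss attributable to agent $k$'s surrogate rule composed with its induced algorithm, which condition~(b) bounds. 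Telescoping from $k=0$ to $k=n$ gives a total welfare loss at most the sum of the per-agent surrogate-selection losses, as claimed.

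The main obstacle is precisely the stationarity-based coupling invoked at each step. Without condition~(c), the induced algorithm that agent $i$ actually faces under the composed mechanism would not be the one appearing in condition~(a) (the other agents' surrogates would be distributed differently from their types), so neither the maximal-in-range reduction nor the hybrid welfare telescoping would survive. Once stationarity is in hand, the remainder is routine: invoke Theorem~\ref{t:IC} for the existence of IC payments and couple marginals across consecutive hybrids for the welfare bound.
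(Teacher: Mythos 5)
Your proposal is correct and follows essentially the same route as the paper's (cited) argument: the paper defers this theorem to \citet{HKM-15} but carries out the same stationarity-based decoupling in Appendix~\ref{app:omitted} (Lemmas~\ref{lem:dispreserv} and~\ref{lem:singleBIC}), where condition~(c) guarantees each agent faces exactly the induced algorithm of condition~(a), and maximality-in-range plus Theorem~\ref{t:IC} then yields BIC per agent. Your hybrid/telescoping argument for the welfare loss is the standard one from that source and is sound, since independence of the agents' types and of the surrogate rules' internal randomness makes each swap affect only the swapped agent's expected value.
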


The surrogate selection rule of \citet{HKM-15} is based on setting up
a matching problem between random types from the distribution
(replicas) and the outcomes of the algorithm on random types from the
distribution (surrogates).  The true type of the agent is one of the
replicas, and the surrogate selection rule outputs the surrogate to
which this replica is matched. Given an induced allocation algorithm $\alloc(\cdot)$, assigning a replica $r_i$ to a \emph{surrogate outcome} $\alloc(s_j)$ -- which basically is a distribution over possible outcomes in $\mathcal{O}$ that the induced
algorithm produces for a surrogate $s_j$-- produces a stochastic value equal to $v(r_i,o)$, where $o\sim\alloc(s_j)$. In the aforementioned matching problem, we think of expectations of these stochastic values, i.e., the quantities $\expect[o \sim
  \alloc(s_j)]{v(r_i,o)}$ for each $(r_i,s_j)$, as weights on the edges. Now, this approach addresses the three
properties of surrogate selection rules as:
\begin{itemize}
\item[(a)] if the matching
selected is maximal-in-range given the weights, then the composition of the surrogate
selection rule with the induced algorithm is maximal-in-range,
\item[(b)] the
welfare of the matching is the welfare of the reduction and the maximum weighted
matching approximates the welfare of the original algorithm, and
\item[(c)] any maximal matching gives a stationary surrogate selection rule. 
\end{itemize}
In fact, the main reason to consider a replica-surrogate matching rather than assigning the reported type to the maximum value surrogate outcome is to obtain both welfare preservation (when market size $m$ is large enough) and stationarity property; see~\citet{HKM-15} for more details. For a detailed discussion on why maximal-in-range matching will result in a BIC mechanism after composing the corresponding surrogate selection rule with the allocation algorithm, we refer the interested reader to look at Lemma~\ref{lem:dispreserv} and Lemma~\ref{lem:singleBIC} in Appendix~\ref{app:omitted}.

In this paper, we consider a slight generalization of the surrogate selection rule in \citet{HKM-15}, which is a family of surrogate selection rules based on \emph{many-to-one matchings with budgets}. For reasons that will be clear in Section~\ref{sec:ent-reg-matching}, this degree of freedom will critically help us to go beyond ideal computational model and obtain exact BIC blackbox reductions in the expectations from samples computational model.
\begin{definition}
The \emph{replica-surrogate matching} surrogate selection rule; for a $k$-to-$1$ matching algorithm $M$, a integer market size $m$, and budget $k$; maps a type $t$ to a surrogate type as follows:
\label{defn:rsba}
\begin{enumerate}
\item Pick the real-agent index $i^*$ uniformly at random from $\{1,\ldots,km\}$.

\item Define the \emph{replica type profile} $\rbf$, an $km$-tuple of
  types by setting $r_{i^*}=t$ and sampling the remaining $km-1$ replica
  types $\rbf_{-i^*}$ i.i.d.\@ from the type distribution $\prior{}$.

\item Sample the \emph{surrogate type profile} $\sbf$, an $m$-tuple of
  i.i.d.\@ samples from the type distribution $\prior{}$.

\item 
Run matching algorithm $M$ on the complete bipartite graph between
replicas and surrogates.

\item Output the surrogate $j^*$ that is matched to $i^*$.
\end{enumerate}
\end{definition}

The value that a replica obtains for the outcome that the induced
algorithm produces for a surrogate is a random variable.  The analysis of \citet{HKM-15} is
based on the study of an ideal computational model where the value of
any replica $r_i$ for any surrogate outcome $\alloc(s_j)$, i.e., the quantity $\expect[o \sim
  \alloc(s_j)]{v(r_i,o)}$,  is known exactly.  In this
computationally-unrealistic model and with these values as weights,
the maximum weight matching algorithm can be employed in the
replica-surrogate matching surrogate selection rule above, and it results
in a Bayesian incentive compatible mechanism.~\citet{HKM-15} analyze
the welfare of the resulting mechanism in the case where the budget is
$k=1$, prove that conditions (a)-(c) are satisfied, and give
(polynomial) bounds on the size $m$ that is necessary for the expected
welfare of the mechanism to be an additive $\epsilon$ from that of the
algorithm.  

\begin{remark}
Given a BIC allocation algorithm $\tilde{\alloc}$ through a replica-surrogate matching surrogate selection, the payments that satisfy Bayesian incentive compatibility can be easily calculated with black-box access to $\tilde{\alloc}$ as implicit payments (Section~\ref{sec:implicitpayments}). 
\end{remark}

If $M$ is maximum matching, conditions (a)-(c) clearly continue to hold for our
generalization to budget $k>1$.  Moreover, the welfare of the reduction will only weakly increase for $k>1$.  

\begin{lemma}
\label{lem:matching-monotone}
In the ideal computational model (where the value of a replica for being matched to a surrogate is given exactly) the per-replica welfare of the replica-surrogate maximum matching for $k=1$ is no larger than the per-replica welfare of the replica-surrogate maximum matching for any budget $k>1$.
\end{lemma}
\proof
Consider a non-optimal matching that groups replicas into $k$ groups
of size $m$ and finds the optimal $1$-to-$1$ matching between replicas
in each group and the surrogates.  As these are random $(k=1)-$matchings, the expected welfare of each such matching is equal to the
expected welfare of the $(k=1)-$matching.  These matchings combine to
give a feasible matching between the $mk$ replicas and $m$ surrogates.
Thus, the total expected welfare of the optimal $k$-to-$1$ matching
between replicas and surrogates is at least $k$ times the expected
welfare of the $(k=1)-$matching.  Thus, the per-replica welfare, i.e.,
normalized by $mk$, is at least the per-replica welfare of the $(k=1)-$matching. 
\endproof

% In
%this section we give an approximation scheme for solving this matching
%problem that is maximal in range. 

Our main result in the remainder of this section is an approximation scheme for the ideal reduction of \citet{HKM-15}. We replace this ideal matching
algorithm with an approximation scheme for the black-box model where replica values for surrogate outcomes can only be accessed by sampling, i.e., we only have sample access to random variables $v(r_i,o)$ for $o\sim\alloc(s_j)$.  For any $\epsilon$, we identify a $k>1$ and a polynomial (in $m$ and
$1/\epsilon$) time $k$-to-$1$ matching algorithm for the black-box
model and prove that the expected welfare of this matching algorithm (per-replica) is within an additive $\epsilon$ of the expected welfare (per-replica) of the
maximum weighted matching in the ideal model with budget $k=1$ analyzed by \citealp{HKM-15}. The welfare of the ideal model is monotone non-decreasing in budget $k$ due to Lemma~\ref{lem:matching-monotone}; therefore it will be sufficient to identify a polynomial time algorithm in the black-box model that has $\epsilon$ loss relative to the ideal model for that same budget $k$. Moreover, we show our algorithm produces
a perfect (and so maximal) matching, and therefore the surrogate selection rule is
stationary; and the algorithm is maximal-in-range for the true agent's replica, and therefore the
resulting mechanism is Bayesian incentive compatible.

%In the remainder of this section we replace this ideal matching
%algorithm with an approximation scheme for the black-box model where
%replica values for surrogate outcomes can only be accessed by
%sampling.  For any $\epsilon$ our algorithm gives an $\epsilon$
%additive loss of the welfare of the ideal algorithm with only a
%polynomial increase to the runtime.  Moreover, the algorithm produces
%a perfect (and so maximal) matching, and therefore the surrogate selection rule is
%stationary; and the algorithm is maximal-in-range for the true agent's replica, and therefore the
%resulting mechanism is Bayesian incentive compatible.

\subsection{Entropy Regularized Matching}
\label{sec:ent-reg-matching}

The main idea in this section is to figure out the right maximal-in-range allocation for the replica-surrogate matching problem, so that it approximates the maximum matching allocation by an additive $\epsilon$ loss in the per-replica welfare, and also is implementable for the black-box model discussed previously.  To this end, we define an \emph{entropy regularized} bipartite matching
problem and discuss its solution. While this solution cannot be implemented as it is for reasons that we will discuss later in this section, it is the key in having a polynomial time approximate scheme for the black-box model.

Consider a complete bipartite graph with $km$ vertices on the left-hand-side and $m$ vertices on the right-hand-side. We will refer to the
left-hand-side vertices as replicas  and the right-hand-side vertices
as surrogates. Fix a replica type profile  $\rbf$ and a surrogate type profile $\sbf$.  The weights on the edge between replica $i \in
\{1,\ldots,km\}$ and surrogate $j \in \{1,\ldots,m\}$ will be denoted
by $v_{i,j}$.  In our application to the replica-surrogate matching
defined in the previous section, the weights will be set to $v_{i,j} =
\expect[o \sim \alloc(s_j)]{v(r_i,o)}$ for $(i,j)\in[km]\times[m]$.

\begin{definition} 
\label{defn:LBM}
For weights $\vbf = [v_{i,j}]_{(i,j)\in[km]\times[m]}$, the entropy regularized matching program for parameter $\delta>0$ is:
\begin{align*}
\max_{\{x_{i,j}\}_{(i,j)\in[km]\times[m]}} & \sum\nolimits_{i,j}x_{i,j}\,v_{i,j} - \delta \sum\nolimits_{i,j} x_{i,j} \log x_{i,j},\\
\textrm{s.t.~~~~} & \sum\nolimits_{i} x_{i,j} \leq k  & \forall j\in[m], \\
              & \sum\nolimits_j x_{i,j} \leq 1  & \forall i\in[km].
\end{align*}
The optimal value of this program is denoted $\OPT(\vbf)$.
\end{definition}

The dual variables for right-hand-side constraints of the matching
polytope can be interpreted as \emph{prices} for the surrogate outcomes.  Given prices, the
\emph{utility} of a replica for a surrogate outcome given prices is the
difference between the replica's value and the price.  The following
lemma, whose proof is a direct application of Karush–Kuhn–Tucker conditions, shows that for the right choice of dual variables, the maximizer
of the entropy regularized matching program is given by exponential
weights with weights equal to the utilities.

\begin{lemma}
\label{obs:dual}
For the optimal Lagrangian dual variables $\alphabf^*\in\mathbb{R}^m$
for surrogate feasibility in the entropy regularized matching program 
(Definition~\ref{defn:LBM}), namely,
\begin{align*}
\alphabf^*&=\argmin_{\alphabf\geq {\mathbf 0}} \max\nolimits_{\xbf} \big\{ \lagrange(\xbf,\alphabf) : \ \sum\nolimits_j x_{i,j} \leq 1~,~\forall i \big\}\\
\intertext{where $\lagrange(\xbf,\alphabf)\triangleq\sum_{i,j} x_{i,j}\,v_{i,j} -
\delta\sum_{i,j} x_{i,j} \log x_{i,j} +\sum_{j}\alpha_j(k-\sum_{i}
x_{i,j})$ is the Lagrangian objective function; the optimal solution $\xbf^*$ to the primal is given by exponential weights:}
x^*_{i,j}&=\frac{\exp \left(\frac{v_{i,j}-\alpha^*_j}{\delta} \right)}
 {\sum_{j'} \exp \left(\frac{v_{i,j'}-\alpha^*_{j'}}{\delta} \right)},
 \quad \forall i,j.
\end{align*}
\end{lemma}

Lemma~\ref{obs:dual} recasts the entropy regularized matching
as, for each replica, sampling from the distribution of exponential
weights.  For any replica $i$ and fixed dual variables $\alphabf$ our
Fast Exponential Bernoulli Race (Algorithm~\ref{alg:ber-lograce-poly}) gives a polynomial time algorithm for
sampling from the distribution of exponential weights in the
expectations from samples computational model.

%% Suppose replica-agent $i$ has replica type $r$. Then his value for
%% surrogate outcome distribution $O_j$ is a random value $v(r,o_j)$,
%% where $o_j\sim O_j$. This is an instance of the single-agent
%% multiple urns problem where we have an urn for each surrogate
%% outcome distribution $O_j$. Interestingly, based on the result of
%% Section~\ref{sec:single}, we know how to efficiently implement the
%% exponential weight allocations (similar to the optimal allocation
%% described in Observation~\ref{obs:dual}) by using the logistic
%% Bernoulli race algorithm
%% (Algorithm~\ref{alg:ber-lograce-poly}). The only difference is this
%% new allocation requires an additive shift of values by dual
%% variables first, which can easily be taken care of truthfully. More
%% concretely, we have the following lemma.
\begin{lemma} 
\label{lem:logsticprice}
For replica $i$ and any prices (dual variables) $\alphabf\in[0,h]^m$, allocating a surrogate $j$
drawn from the exponential weights distribution 
\begin{equation}
\label{eq:logisticprice}
x_{i,j}=\frac{ \exp \left(\frac{v_{i,j}-\alpha_j}{\delta} \right)}
{\sum_{j'} \exp \left(\frac{v_{i,j'}-\alpha_{j'}}{\delta} \right)},
\quad \forall j\in[m],
\end{equation}
 is maximal-in-range for replica $i$, as defined in Definition~\ref{t:IC}, and this random surrogate $j$ can be sampled with $O \left( \frac{h^4
  m^{2}\log(hm/\delta) }{{\delta^4}} \right)$ samples from
replica-surrogate-outcome value distributions.
\end{lemma}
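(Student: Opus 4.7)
My plan is to split the lemma into two independent claims: (i) the allocation given by \eqref{eq:logisticprice} is maximal-in-range, and (ii) a sample from it can be drawn using $O(h^4 m^2\log(hm/\delta)/\delta^4)$ samples by reducing to the Fast Exponential Bernoulli Race of Theorem~\ref{thm:ber-lograce-poly}.

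For maximal-in-range, I would directly verify that the distribution $\xbf_i^* = (x_{i,1},\dots,x_{i,m})$ in \eqref{eq:logisticprice} is the unique maximizer of the strictly concave program
\begin{equation*}
\max_{\xbf'\in\Delta^{m-1}} \sum\nolimits_j v_{i,j}\, x'_j \;-\; c(\xbf'),\quad\text{where } c(\xbf') \triangleq \sum\nolimits_j \alpha_j x'_j \;+\; \delta \sum\nolimits_j x'_j\log x'_j.
\end{equation*}
This is essentially Observation~\ref{obs:dual} applied per-replica (with the $\alpha_j$'s and $\delta$ held fixed), and one first-order / Lagrange-multiplier calculation confirms the closed form. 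Crucially, $c(\cdot)$ depends only on $\alphabf$, $\delta$, and $\xbf'$---not on replica $i$'s values $v_{i,\cdot}$---so by Theorem~\ref{t:IC} the rule is maximal-in-range for replica $i$.

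For the sample complexity, the goal is to massage the weights $(v_{i,j}-\alpha_j)/\delta \in [-h/\delta,\,1/\delta]$ into a form that the Fast Exponential Bernoulli Race can consume, namely $\exp(\lambda w_j)$ for biases $w_j\in[0,1]$. Since the exponential-weights distribution is invariant under a uniform additive shift of all weights, I set
\begin{equation*}
w_j \;\triangleq\; \frac{v_{i,j}-\alpha_j+h}{h+1} \;\in\;[0,1], \qquad \lambda \;\triangleq\; \frac{h+1}{\delta},
\end{equation*}
so that $\lambda w_j = (v_{i,j}-\alpha_j)/\delta + h/\delta$ and the constant $h/\delta$ cancels in the normalization. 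A $w_j$-coin is easy to build from one $v_{i,j}$-coin flip: with probability $1/(h+1)$, flip the (Continuous-to-Bernoulli lift of the) $v_{i,j}$-coin; otherwise return a flip of the known-bias $(h-\alpha_j)/h$ coin. This produces a $w_j$-coin sample using at most one call to the underlying replica-surrogate value distribution.

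Feeding these $w_j$-coins into the Fast Exponential Bernoulli Race with parameter $\lambda=(h+1)/\delta$ (assuming $\lambda>4$; otherwise $\delta$ is already a constant and the Basic Exponential Bernoulli Race suffices) produces the desired sample, by Theorem~\ref{thm:ber-lograce-poly}, in expected $O(\lambda^4 m^2 \log(\lambda m))=O(h^4 m^2 \log(hm/\delta)/\delta^4)$ coin flips, which matches the claimed bound since each $w_j$-flip costs at most one underlying sample. The only mildly delicate point---and the step I would double-check carefully---is that the shift-and-rescale trick truly preserves the exact target distribution (so maximal-in-range is not perturbed by a sampling-error approximation) and that the $[0,1]$ bound on $w_j$ makes the Bernoulli Doubling invocation inside the Fast Exponential Bernoulli Race fast enough; both follow from the choice of shift $h$ and scale $h+1$, which exactly matches the interval $[-h,1]$ in which $v_{i,j}-\alpha_j$ lives.
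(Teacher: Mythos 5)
Your proposal is correct and follows essentially the same route as the paper: the maximal-in-range claim via the entropy-plus-prices concave program (per-replica analogue of Observation~\ref{obs:dual}), and the sampling bound via shifting the utilities $v_{i,j}-\alpha_j\in[-h,1]$ by $h$, rescaling by $1/(h+1)$, and running the Fast Exponential Bernoulli Race with $\lambda=(h+1)/\delta$. Your explicit mixture construction of the shifted coin and the remark about the $\lambda>4$ edge case are just details the paper leaves implicit.
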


\proof
To see that the distribution is maximal-in-range when assigning surrogate outcome $j$ to replica $i$, consider the regularized welfare maximization 
$$\argmax_{\xbf'}\sum_j v_{i,j}\, x_j' - \delta\sum_{j}x'_{j}\log{x'_{j}}-\sum_{j}\alpha_jx'_{j}$$ 
for replica $i$.  By looking at the first-order conditions, similar to Lemma~\ref{obs:dual}, it is easy to see that the exponential weight distribution in (\ref{eq:logisticprice}) is the unique maximizer of this concave program.

To apply the Fast Exponential Bernoulli Race to the utilities, of the
form $v_{i,j} - \alpha_j \in [-h,1]$, we must first normalize them to
be on the interval $[0,1]$.  This normalization is accomplished by
adding $h$ to the utilities (which has no effect on the exponential
weights distribution, and therefore preserves being maximal-in-range), and then scaling by $1/(h+1)$.  The scaling
needs to be corrected by setting $\lambda$ in the Fast Exponential
Bernoulli Race (Algorithm~\ref{alg:ber-lograce-poly}) to
$(h+1)/\delta$.  The expected number of samples from the value
distributions that are required by the algorithm, per
Theorem~\ref{thm:ber-lograce-poly}, is $O(h^4m^2\log
(hm/\delta)\delta^{-4} )$. 

\endproof

If we knew the optimal Lagrangian variables $\alphabf^*$ from
Lemma~\ref{obs:dual}, it would be sufficient to define the
surrogate selection rule by simply sampling from the true agent
$i^*$'s exponential weights distribution (which is polynomial time per
Lemma~\ref{lem:logsticprice}).  Notice that the wrong values
of $\alphabf$ correspond to violating primal constraints for the
surrogates. Thus the outcome from sampling from exponential
weights for such $\alpha$ would not correspond to a matching, while remains to be maximal-in-range for each replica.
In the next section we propose a polynomial time approximation scheme
that outputs a matching that is maximal-in-range for each replica, and therefore for the true agent $i^*$, and at the same time approximates sampling
from the correct $\alphabf^*$.

%% Now, one might think that implementing an optimal solution to the RSLM
%% truthfully can be done by just using Lemma~\ref{lem:logsticprice} with
%% $\alphabf=\alphabf^*$. However, such an allocation rule cannot be
%% truthful as $\alpha^*$ depends on the entire replica type profile, and
%% therefore from the perspective of any replica-agent $i$ the allocation
%% in (\ref{eq:logisticprice}) is not a DSIC allocation.

\subsection{Online Entropy Regularized Matching}

In this section, we reduce the entropy regularized matching problem to the problem of sampling from exponential weights (as described in Lemma~\ref{lem:logsticprice}) in a sequential fashion over all replicas. Although the actual problem is indeed an offline problem, we treat it as an \emph{online problem} where replicas arrive online, but  the ordering under which they arrive is in our control. This treatment helps us to preserve the maximal-in-range property of our assignment for each replica, while guaranteeing primal feasibility and near-optimal objective value in the entropy regularized matching problem (and hence near-optimal social welfare for small enough $\delta$).

Similar to Section~\ref{sec:ent-reg-matching}, fix arbitrary profiles of replicas $\rbf$ and surrogates $\sbf$.  Now consider going over replicas $\rbf$ in a \emph{random order}, over times $i=1,\ldots,km$, and assigning them to the surrogates by sampling from the exponential weights distribution as given by Lemma~\ref{lem:logsticprice} with prices $\alphabf^{(i)}, i=1,\ldots,km$  (we will detail later how to set these prices). The basic observation is that (near-optimal) approximate dual variables $\alphabf^{(i)}, i=1,\ldots,km$ are sufficient for an online
assignment of each replica to a surrogate via
Lemma~\ref{lem:logsticprice} to obtain (near-optimal) approximations to the optimum offline
regularized matching. Moreover, such a sequential assignment will result in a maximal-in-range allocation for each replica.

How to construct a sequence of dual prices $\alphabf^{(i)}$ for $i=1,\ldots,km$ -- ideally in an online fashion-- that can play the role of near-optimal approximations to the optimal dual prices $\alphabf^*$ of the offline problem? How to preserve the feasibility of our assignment by respecting the matching constraints of  the surrogate side? To address these questions, we propose a primal-dual algorithm by borrowing techniques used in designing online algorithms for
stochastic online convex
programming problems~(\citet{agrawal2015fast,chen2015dynamic}), and stochastic online packing problems (\citet{agrawal2009dynamic,devanur2011near,badanidiyuru2013bandits,kesselheim2014primal}).

Our primal-dual online algorithm (Algorithm~\ref{alg:onlineLMB}, below)  considers
the replicas in a random order, updates the dual variables based on the current number of allocated replicas to each surrogate (\emph{dual update} step),  and allocates an available surrogate to each
arriving replica by sampling from the exponential weights distribution as given
by Lemma~\ref{lem:logsticprice} with updated dual variables (\emph{primal assignment} step). Under the hood, the dual update is essentially running a no-regret learning algorithm -- such as {exponential gradient ascent}~\citep{shalev2012online} (also known as multiplicative weights), or {follow-the-perturbed-leader}~\citep{kalai2005efficient}, or  {online mirror descent}~\citep{bubeck2014convex} --  for a specific adversarial online linear optimization problem (which we explain later). Roughly speaking, this online learning algorithm tries to learn a dual assignment that fits the primal allocation the best in terms of dual complementary slackness, or equivalently tries to minimize the objective value of the Fenchel dual program of the primal entropy regularized matching problem (cf. \citet{boyd2004convex}). For the ease of exposition, we use exponential gradient ascent for our dual updates in Algorithm~\ref{alg:onlineLMB}, but in principle it can be replaced by any online learning algorithm with the same regret guarantees.

Algorithm~\ref{alg:onlineLMB} is parameterized by
$\delta$, the scale of the regularizer; by $\eta$, the rate at which
the algorithm learns the dual variables $\alphabf$; and by scale
parameter $\gamma$. For technical reasons, our algorithm uses scaled dual prices $\gamma\alphabf^{(i)}$; we detail later why this modification is needed and how to set scale parameter $\gamma$.
%

%EXPAND THIS
%PARAGRAPH OR ADD ANOTHER TO DISCUSS RELATIONSHIP BETWEEN OUR ALGORITHM
%and ``online mirror descent'' AND ``exponential gradient''.

\floatname{algorithm}{Algorithm}
\begin{algorithm}[tbh]
\small
\algblock[Name]{Start}{End}
\algblockdefx[NAME]{START}{END}%
[2][Unknown]{Start #1(#2)}%
{Ending}
\algblockdefx[NAME]{}{OTHEREND}%
[1]{Until (#1)}
 \caption{ Online Entropy Regularized Matching Algorithm (with parameters $\delta, \eta, \gamma \in \mathbb{R}_+$)
     \label{alg:onlineLMB}}
      \begin{algorithmic}[1]
        \State {\bf input:} sample access to replica-surrogate matching instance with expected values $\{v_{i,j}\}$ for replicas $i \in \{1,\ldots,mk\}$ and surrogates $j \in \{1,\ldots,m\}$.
        \vspace{2mm}
       \State Shuffle the replicas by a uniform random permutation $\pi:[mk]\rightarrow [mk]$, hence indexed by $\pi(1),\ldots,\pi(mk)$.
%      	\State Calculate $\gamma = \OPT(\tilde{\vbf})$ for
%        $\tilde{\vbf}$ calculated as the empirical sample averages of
%        a new $km$ surrogage to $m$ replica matching.

\ForAll{$i \in \{1,\ldots,km\}$}

        \State{Let $k_j$ be the number of replicas previously matched to
        each surrogate $j$ and $J = \{j : k_j < k\}$ the set of surrogates with
        availability remaining.}

      	\State\underline{\emph{Dual update:}} set $\alphabf^{(i)}$ according to the exponential weights distribution with weights $\eta \cdot k_j$ for available surrogates
       $j \in J:~\alpha_j^{(i)}=\exp(\eta\cdot k_j)/\sum_{j'\in J}\exp(\eta\cdot k_{j'})$.

      	\State \underline{\emph{Primal assignment:}} By running the fast exponential Bernoulli race (Algorithm~\ref{alg:ber-lograce-poly}), match the arriving replica at time $i$ (i.e., replica $\pi(i)$) to an available surrogate $j \in J$ drawn according to the exponential weights distribution with weights $(v_{\pi(i),j} -
        \gamma\, \alpha_j^{(i)})/\delta$.

%% \left\{ \begin{array}{ll}
%%         1, & \mbox{if $i=\hat{\imath}_j$};\\
%%         0 ,& \mbox{otherwise}.\end{array} \right.$

\EndFor
      	\end{algorithmic}
\end{algorithm}

The final algorithm needs to satisfy four properties to be useful as a surrogate selection rule in a
polynomial time Bayesian incentive compatible blackbox reduction.  First, it needs to produce a maximal
matching so that the replica-surrogate matching surrogate selection
rule is stationary, specifically via condition (c) in Section~\ref{sec:surrogate-selcetion}.  It needs to be
maximal-in-range for the real agent (replica $i^*$).  In fact, all
replicas are treated symmetrically and allocated by sampling from an
exponential weights distribution that is maximal-in-range via
Lemma~\ref{lem:logsticprice}.  Third, it needs to have good welfare
compared to the ideal matching (no smaller than $\epsilon$ from optimal welfare in the ideal model).  Fourth, its runtime needs to be
polynomial.  The first two properties are immediate and imply the
theorem below.  The last two properties are analyzed in the next section.

\begin{theorem} 
The mechanism that maps types to surrogates via the replica-surrogate
matching surrogate selection rule with the online entropy regularized
matching algorithm (Algorithm~\ref{alg:onlineLMB}) is Bayesian
incentive compatible (truthful payments are computed implicitly from Theorem~\ref{t:IC}).
\end{theorem}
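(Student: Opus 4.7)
The plan is to verify the two hypotheses required by the composition machinery of \citet{HKM-15} (recalled just before Definition~\ref{defn:rsba}, and elaborated as Lemma~\ref{lem:dispreserv} and Lemma~\ref{lem:singleBIC} in Appendix~\ref{app:omitted}): stationarity of the surrogate selection rule (condition (c)), and maximality-in-range from the real agent's perspective (condition (a)). Once both hold, Theorem~\ref{t:IC} supplies a payment rule that makes the full multi-agent mechanism BIC.

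First I will argue that Algorithm~\ref{alg:onlineLMB} always returns a perfect $k$-to-$1$ matching. Total surrogate capacity is exactly $km$ and there are exactly $km$ replicas, and a surrogate is removed from $J$ only after it has already absorbed $k$ replicas; hence $J$ cannot become empty until all replicas are matched. Every surrogate is therefore used exactly $k$ times. Stationarity follows immediately: the $m$ surrogate types are i.i.d.\ from the prior $F$, the matching covers every surrogate, and the real agent's slot $i^*$ is uniform over $\{1,\dots,km\}$, so the marginal law of the returned surrogate type $s_{j^*}$ is exactly $F$, which is condition~(c).

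Next I will establish maximality-in-range from the agent's perspective. Fix the report $\hat t$ and condition on all the randomness produced in iterations $1,\dots,i^*-1$, namely $i^*$, $\rbf_{-i^*}$, $\sbf$, every internal coin flip of the earlier Bernoulli Races, and hence the dual vector $\alphabf^{(i^*)}$ and the availability set $J$. None of these quantities depend on $\hat t$, because Algorithm~\ref{alg:onlineLMB} updates $\alphabf^{(i)}$ using only the counts of previously matched surrogates, and $\hat t$ first enters the computation at iteration $i^*$. Given this conditioning, iteration $i^*$ draws a surrogate $j \in J$ from the exponential weights distribution of Lemma~\ref{lem:logsticprice} over the utilities $v_{i^*,j}-\gamma\alpha_j^{(i^*)}$; that lemma shows this rule is maximal-in-range in $\{v_{i^*,j}\}_{j\in J}$, with the $\gamma\alpha_j^{(i^*)}$ terms absorbed into the cost function allowed by Theorem~\ref{t:IC}. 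Composing with the fixed map $j\mapsto \alloc(s_j)$ preserves maximality-in-range of the induced outcome rule seen by the agent.

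To finish, I will apply Lemma~\ref{lem:singleBIC} (the HKM composition theorem): since, conditional on the pre-$i^*$ state, the agent faces a maximal-in-range outcome rule, Theorem~\ref{t:IC} provides a conditional payment $p_{\omega}(\hat t)$ that makes truth-telling optimal; averaging over the state $\omega$ (which is independent of $\hat t$) produces a payment rule under which no misreport strictly improves the agent's interim expected utility, i.e., BIC. The main point to handle carefully will be performing the conditioning \emph{before} invoking Theorem~\ref{t:IC}, so that $\alphabf^{(i^*)}$ and $J$ can legitimately be treated as constants when the maximal-in-range characterization is applied at the inner step, and only then taking the outer expectation over the state.
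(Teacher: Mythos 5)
Your proposal is correct and follows essentially the same route as the paper, which simply asserts that perfect matching (hence stationarity, Lemma~\ref{lem:dispreserv}) and the maximality-in-range of the per-replica exponential-weights step (Lemma~\ref{lem:logsticprice}, composed via Lemma~\ref{lem:singleBIC}) are "immediate." Your version is in fact more careful than the paper's one-line justification, in particular in making explicit that the duals $\alphabf^{(i^*)}$, the availability set $J$, and the pre-$i^*$ history are independent of the reported type before Theorem~\ref{t:IC} is invoked.
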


\subsection{Social Welfare Loss}

We analyze the welfare loss of the online entropy regularized matching
algorithm (Algorithm~\ref{alg:onlineLMB}) with regularizer parameter
$\delta$, learning rate $\eta$, and scale parameter $\gamma$. During the analysis, we show how to set these parameters to guarantee the per-replica expected welfare loss is at most $\epsilon$ .

\begin{theorem}
\label{thm:main-bic-reduction}
There are parameter settings for online entropy regularized matching
algorithm (Algorithm~\ref{alg:onlineLMB}) for which (1)~its per-replica
expected welfare is within an additive $\epsilon$ of the optimal
welfare of the replica surrogate matching, and (2)~given oracle access to $\alloc$, the running time of this algorithm is polynomial in $m$ and $1/\epsilon$.
\end{theorem}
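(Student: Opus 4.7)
The plan decomposes into a welfare bound and a runtime bound; the welfare bound is the hard part. The overall strategy is to compare the algorithm's expected (raw) welfare to the offline entropy-regularized matching optimum $\OPT(\vbf)$ of Definition~\ref{defn:LBM}, and then to the ideal (unregularized) maximum weight matching used as the benchmark by \citet{HKM-15}. The regularizer accounts for at most $\delta\, km\log m$ difference between these two objectives, because $-\sum_{i,j}x_{i,j}\log x_{i,j}\leq km\log m$ whenever $\sum_j x_{i,j}\leq 1$ for each $i$. Choosing $\delta=\Theta(\epsilon/\log m)$ makes this regularization slack $O(\epsilon)$ per replica in both directions: it bounds $\OPT(\vbf)$ above in terms of the ideal matching, and bounds the algorithm's raw welfare below in terms of its (negative) entropy contribution.

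The core argument bounds the gap between the algorithm's regularized welfare and $\OPT(\vbf)$. By Observation~\ref{obs:dual}, the primal optimum of the regularized program assigns each replica according to the exponential-weights distribution with utilities $v_{i,j}-\alpha^*_j$; by Lemma~\ref{lem:logsticprice}, if the algorithm used $\alphabf^*$ at every step it would realize expected regularized welfare equal to $\OPT(\vbf)$. The update $\alpha_j^{(i)}\propto\exp(\eta k_j)$ is online mirror descent on the dual (with entropy mirror map), in the mold of the stochastic online convex programming algorithms of \citet{agrawal2015fast}. Because replicas arrive in a uniformly random order, the accumulated load $k_j$ after $i$ arrivals concentrates around $(i/km)\cdot k$ times the primal-optimum fraction at surrogate $j$, so after a burn-in of $\mathrm{poly}(m,1/\epsilon)$ replicas the duals track $\alphabf^*$. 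A standard online mirror descent regret bound, combined with the random-arrival concentration, bounds the total welfare gap by roughly $O(\eta\, km)+O(\log m/\eta)$, optimized with $\eta=\Theta(\sqrt{\log m/(km)})$.

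A subtle requirement is that the output be a perfect $k$-to-$1$ matching so that the induced surrogate selection rule is stationary (condition (c) of Section~\ref{sec:exact-BIC-reduction}). The hard cap to $J=\{j:k_j<k\}$ in Algorithm~\ref{alg:onlineLMB} ensures no surrogate exceeds $k$ matches; the dual update raises prices on loaded surrogates, so under-loaded surrogates become preferred. Setting $\gamma$ to a small-sample estimate of $\OPT(\vbf)/k$ calibrates the magnitude of the dual variables so that the expected per-surrogate load is exactly $k$, whence a concentration argument (again using random arrival order) yields a perfect matching with high probability; on the rare failure event one pads arbitrarily, losing $O(\epsilon)$ in welfare. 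This also keeps the duals $\alphabf^{(i)}$ in a bounded range $[0,h]$ with $h=O(1)$, which is needed to apply Lemma~\ref{lem:logsticprice}.

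Runtime is mechanical. The algorithm performs $km$ iterations, each invoking the Fast Exponential Bernoulli Race on $m$ coins with $\lambda=O(1/\delta)$; Lemma~\ref{lem:logsticprice} bounds the number of samples from replica--surrogate value distributions per iteration by $O(h^4 m^2 \log(hm/\delta)/\delta^4)$. Each such sample costs one oracle call to $\alloc$ and one evaluation of $v$. Substituting $\delta=\Theta(\epsilon/\log m)$ and the polynomial load $k=\mathrm{poly}(m,1/\epsilon)$ inherited from~\citet{HKM-15}, the total runtime is polynomial in $m$ and $1/\epsilon$. The main obstacle is the joint welfare--feasibility analysis: simultaneously calibrating $\delta$, $\eta$, and $\gamma$ so that the regularization gap, the online mirror descent regret, and the load imbalance all shrink to within an $\epsilon$ budget, with the random arrival order of replicas serving as the essential source of concentration.
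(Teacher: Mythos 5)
Your decomposition is the same as the paper's: an additive $\delta\log m$ per-replica loss from the regularizer (with $\delta=\Theta(\epsilon/\log m)$), a multiplicative $(1-O(\eta))$ competitive-ratio loss of the online algorithm against the offline regularized optimum $\OPT(\vbf)$, a sampled estimate of $\OPT(\vbf)/k$ for $\gamma$, and a mechanical runtime bound via Lemma~\ref{lem:logsticprice}. The runtime half and the regularization-gap half are fine. The gap is in the central competitive-ratio step, where your stated mechanism is not the one that works. You claim that because replicas arrive in random order, the loads $k_j$ concentrate around the primal-optimal fractions so that ``after a burn-in the duals track $\alphabf^*$.'' That claim is both unsubstantiated (the loads are generated by the algorithm's own exponential-weights choices, which feed back into the duals, so there is no a priori reason they track the offline optimum) and unnecessary. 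The actual argument never shows the duals converge to $\alphabf^*$: it (i) uses per-step optimality of the exponential-weights response to the \emph{current} duals to compare the Lagrangian contribution $\lagrange^{(i)}(\xbf_i,\gamma\alphabf^{(i)})$ against $\lagrange^{(i)}(\xbf^*_i,\gamma\alphabf^{(i)})$, (ii) applies the multiplicative-weights regret bound for the dual payoff sequence $g_i(\alphabf)=\alphabf\cdot(\hat{\xbf}_i-\tfrac{1}{m}\mathbf{1})$ against the best \emph{fixed} dual in the $\ell_1$-ball, and (iii) — the crux you omit — lower-bounds that best fixed dual's payoff by $k-\tfrac{\tau-1}{m}$ at the stopping time $\tau$ when the first surrogate saturates, by playing the indicator of the saturated surrogate. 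It is this term, scaled by $\gamma\approx\OPT(\vbf)/k$, that pays for all the welfare the algorithm forgoes after a surrogate fills up; without it a ``burn-in'' argument gives you nothing once capacities bind. You also need martingale concentration (Azuma, plus Lemma~4.1 of \citet{agrawal2015fast}) to control the gap between the sampled assignment $\hat{\xbf}_i$ and its conditional expectation; this is where $k\gtrsim m\log m/\eta^2$ comes from, not from optimizing $\eta=\Theta(\sqrt{\log m/(km)})$ as you propose.

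Two smaller corrections. First, the matching is a perfect $k$-to-$1$ matching \emph{deterministically}: the algorithm restricts each replica to the available set $J$, and by pigeonhole $J$ is never empty, so no whp-plus-padding argument is needed (and padding ``arbitrarily'' would risk breaking maximality-in-range for the real replica). The early-saturation issue is absorbed entirely by the welfare analysis via the stopping-time term above. Second, the prices fed to Lemma~\ref{lem:logsticprice} are $\gamma\alpha^{(i)}_j$ with $\alphabf^{(i)}$ a probability vector and $\gamma=O(\OPT(\vbf)/k)=O(m)$, so the bound $h=O(1)$ you assert should be $h=O(m)$; this still yields a polynomial runtime, but the constant-range claim as written is false.
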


\proof[\textbf{Proof overview}] To prove Theorem~\ref{thm:main-bic-reduction},  we consider the following three steps: 
\begin{itemize}
\vspace{1mm}
\item~\emph{Step \rom{1} :} We first analyze the performance  of Algorithm~\ref{alg:onlineLMB} with learning rate $\eta$ in the entropy regularized matching problem, and argue that our online algorithm and the offline optimal entropy regularized matching algorithm have nearly the same (per-replica) objective value in the convex program, up to an additive loss of $O(\eta)$. We show this result holds if the scale parameter $\gamma$ is set appropriately and $k$ is large enough (still polynomial in $m$ and $1/\eta$). 
\vspace{1mm}
\item~\emph{Step \rom{2}:}  It turns out that to obtain the result in Step~\rom{1}, we need to set $\gamma$ to be a constant approximation to the $k$-fraction of the optimal objective value of the offline convex program in Definition~\ref{defn:LBM}, and also an overestimation. We then argue how to set $\gamma$ to be such an approximation/estimation for the optimal objective value of our offline convex program
with high probability, and with efficient sampling. We do this step while preserving incentive compatibility.
\vspace{1mm}
\item~\emph{Step \rom{3}:} We argue that for small enough regularizer parameter $\delta>0$, the value of the convex objective of the offline optimal entropy regularized matching is nearly as large as the welfare of the offline optimal matching.
\vspace{1mm}
\item~\emph{Step \rom{4}:} Finally, the proof of the
theorem is finished by combining the above steps with the right
choice of parameters $\delta$ and $\eta$ (as functions of $\epsilon$ and $m$), and observing this choice guarantees (i) an additive per-replica social welfare loss of $O(\epsilon)$ with respect to any replica-surrogate $k$-to-$1$ matching, and (ii) polynomial in $m$ and $1/\epsilon$ blackbox oracle complexity (and also running time) for the final algorithm.
\end{itemize}
 \sbox{0}{\popQED}
\endproof

\proof[\textbf{Proof details of Theorem~\ref{thm:main-bic-reduction}}]
We provide the details of the above four steps below.
\sbox{0}{\popQED}
\endproof

\subsubsection*{\underline{Step \rom{1}:} Additive per-replica loss of the online entropy regularized matching algorithm.}
As before, fix arbitrary profiles of replicas $\rbf$ and surrogates $\sbf$, and hence the replica-surrogate expected values $\vbf$. Recall that the expected values $\vbf$ play the role of edge weights in our bipartite graph. Also, recall that $\OPT(\vbf)$ denotes the optimal objective value of the
entropy regularized matching program. We now prove the following proposition.
\begin{proposition} 
\label{prop:compt-analysis}
For a fixed regularizer parameter $\delta>0$, learning rate $\eta>0$,
scale parameter $\gamma>0$, budget $k\in\mathbb{N}$, and market size $m\in\mathbb{N}$ that
satisfy 
%$$\frac{m \log m }{\eta^2}\leq k \leq m^{O(1)}~~~~\textrm{and}~~~~\OPT(\vbf)/k\leq \gamma\leq O(1) \OPT(\vbf)/k~,$$

$$k\geq \frac{m \log ({m}/{\eta}) }{\eta^2}~~~~\textrm{and}~~~~\OPT(\vbf)/k\leq \gamma\leq O(1) \OPT(\vbf)/k~,$$
the online entropy regularized matching algorithm
(Algorithm~\ref{alg:onlineLMB}) obtains an objective value within an additive $O(\eta)$ of the objective value of the optimal entropy regularized matching $\OPT(\vbf)$
(Definition~\ref{defn:LBM}).
\end{proposition}
We start by showing the following technical lemma, which is going to be useful in several steps of the proof of Proposition~\ref{prop:compt-analysis}.
\newcommand{\zbf}{\mathbf{z}}
\newcommand{\ybf}{\mathbf{y}}
\begin{lemma} 
\label{lemma:technical}
Given vectors $\zbf_1,\ldots,\zbf_T\in[0,1]^d$, where $d,T\in \mathbb{N}$, and uniform random permutation  $\pi:[T]\rightarrow[T]$ over $\{1,2,\ldots,T\}$,  define $\ybf_t\triangleq \sum_{\tau=t+1}^{T}\zbf_{\pi(\tau)}/(T-t)$. Then:
$$
\sum_{t=1}^T\mathbf{E}\left[\underset{i\in[d]}{\max}\left\lvert y_{t,i}-y_{0,i}\right\rvert \right]\leq  O\left(\sqrt{T(\log T+\log d)}\right)
$$
\end{lemma}
\proof
Fix $i\in[d]$ and consider the stochastic sequence $y_{t,i}$ for $ t=1,2,\ldots,T$.  We have:
$$
\mathbf{E}\left[y_{t,i}|\pi(1),\ldots,\pi(t-1)\right]=\frac{\mathbf{E}\left[\sum_{\tau=t+1}^{T}z_{\pi(\tau),i}|\pi(1),\ldots,\pi(t-1)\right]}{T-t}=\frac{\frac{T-t}{T-t+1}\sum_{\tau=t}^{T}z_{\pi(\tau),i}}{T-t}=y_{t-1,i}.
$$
Therefore, $\{y_{t,i}\}$ is a martingale sequence with respect to random variables $\pi(1),\pi(2),\ldots,\pi(T)$. Moreover, this martingale sequence has bounded difference, simply because
$$
\lvert y_{t,i}-y_{t-1,i}\rvert=\lvert\frac{\sum_{\tau=t+1}^T z_{\tau,i}}{T-t}-\frac{\sum_{\tau=t}^T z_{\tau,i}}{T-t+1}\rvert=\frac{\lvert \sum_{\tau=t+1}^T z_{\tau,i}- (T-t)z_{t,i}\rvert}{(T-t)(T-t+1)}\leq \frac{\sum_{\tau=t+1}^T\lvert z_{\tau,i}-z_{t,i}\rvert}{(T-t+1)(T-t)}\leq \frac{1}{T-t+1}.
$$
Let $c_t\triangleq 1/(T-t+1)$. Then by using Azuma-Hoeffding concentration bound for bounded difference martingales, for every $\delta>0$ we have:
$$\textrm{Pr}\left\{\lvert y_{t,i}-y_{0,i}\rvert>\delta\right\}\leq 2 \exp\left(-\frac{\delta^2}{\sum_{\tau=1}^{t}c_{t}^2}\right)=2 \exp\left(-\frac{\delta^2}{\sum_{\tau=1}^{t}\frac{1}{(T-\tau+1)^2}}\right)\leq 2\exp\left(-\frac{\delta^2(T-t)}{2}\right)~,$$
where the last inequality holds because
$$
\sum_{\tau=1}^t \frac{1}{(T-\tau+1)^2}=\sum_{\tau=T-t+1}^T\frac{1}{\tau^2}\leq \int_{T-t}^T\frac{1}{x^2}dx=\frac{1}{T-t}-\frac{1}{T}<\frac{1}{T-t}~.$$
By applying union bound, we have:
$$\textrm{Pr}\left\{\underset{i\in[d]}{\max}\left\{\lvert y_{t,i}-y_{0,i}\rvert\right\}>\delta\right\}\leq 2d\exp\left(-\frac{\delta^2(T-t)}{2}\right)~.$$
Therefore, by setting $\delta=\sqrt{\frac{2\log(Td)}{T-t}}$, we have
$$
\mathbf{E}\left[\underset{i\in[d]}{\max}\left\lvert y_{t,i}-y_{0,i}\right\rvert \right]\leq \delta+2d\exp\left(-\frac{\delta^2(T-t)}{2}\right)=\sqrt{\frac{2\log(Td)}{T-t}}+\frac{2}{T}=O\left(\sqrt{\frac{\log(Td)}{T-t}}\right)~.
$$
Now, by summing the above upper-bound over $t=1,2,\ldots,T$, we have
$$
\sum_{t=1}^{T}\mathbf{E}\left[\underset{i\in[d]}{\max}\left\lvert y_{t,i}-y_{0,i}\right\rvert \right]\leq 1+\sum_{t=1}^{T-1}\mathbf{E}\left[\underset{i\in[d]}{\max}\left\lvert y_{t,i}-y_{0,i}\right\rvert \right]\leq O\left(\sqrt{\log(Td)}\right)\sum_{t=1}^{T-1}\frac{1}{\sqrt{T-t}}=O\left(\sqrt{T\log(Td)}\right)~,
$$
where the equation holds because $\sum_{t=1}^{T-1}\tfrac{1}{\sqrt{T-t}}\leq \int_{0}^{T-1}\frac{1}{x^{1/2}}dx=O(\sqrt{T})$.
\endproof

\proof[Proof of Proposition~\ref{prop:compt-analysis}]
Let permutation $\pi:[km]\rightarrow[km]$ denote the replica arrival ordering, meaning that replica $r_{\pi(i)}$ arrives at time $i\in[km]$. $\pi$ is a uniformly random permutation. Let ${\xbf}\in\{0,1\}^{mk\times m}$ denote the actual allocation of Algorithm~\ref{alg:onlineLMB}, that is,  ${x}_{i,j}=1$ if replica $r_{\pi(i)}$ is matched to surrogate $s_j$ and ${x}_{i,j}=0$ otherwise. We further use $\overline{\xbf}$ to denote the ``conditional matching probabilities'' of replicas to surrogates in Algorithm~\ref{alg:onlineLMB}, that is,
\begin{align*}
&\forall j\in J:~~\overline{x}_{i,j}=\frac{\exp\left(\frac{v_{\pi(i),j}-\gamma\alpha^{(i)}_j}{\delta}\right)}{\sum_{j'\in J}\exp\left(\frac{v_{\pi(i),j'}-\gamma\alpha^{(i)}_{j'}}{\delta}\right)}~~,\\
&\forall j\notin J:~~\overline{x}_{i,j}=0~.
\end{align*}
Define stopping time $\tau$ to be the first time that one of the surrogates becomes  unavailable (because all $k$ copies are matched to previous replicas), i.e., 
$$
\tau\triangleq \min \left\{\{t\in[mk]:\exists~j~~\textrm{s.t.}~~\sum_{i=1}^t{x}_{i,j}>k\}\cup\{mk+1\}\right\}~.
$$
Notice that either $\tau-1=mk$ or there exists surrogate $j$ such that $\sum_{i=1}^{\tau-1}{x}_{i,j}=k$. Moreover, define the following quantities for each $i\in[km]$:
\begin{align*}
&\alg{i}(\mathbf{v})\triangleq \sum_{j\in[m]}v_{\pi(i),j}{x}_{i,j}~~,\\
&\overline{\alg{}}_i
(\mathbf{v}) \triangleq \sum_{j\in[m]} v_{\pi(i),j}\,\overline{x}_{i,j} - \delta \sum_{j\in[m]} \overline{x}_{i,j} \log
\overline{x}_{i,j}~.
\end{align*}
Note that ${\alg{i}}(\mathbf{v}) $ is the contribution of the algorithm at time $i$ to the social welfare, and $\overline{\alg{}}_i(\mathbf{v}) $ is the (fractional) contribution of (allocation probabilities of) the algorithm at time $i$ to the convex objective of the entropy regularized matching problem. Likewise, let $\xbf^*$ denote the fractional optimum solution of the offline convex optimization for entropy regularized matching, indexed in a way that $\xbf_i^*$ assigns replica $r_{\pi(i)}$ (fractionally) to surrogates. Let $\opt_i(\mathbf{v})$ denote the contribution of $\xbf_i^*$ to the convex objective of entropy regularized matching for each $i\in[km]$, that is,
$$
\opt_i(\mathbf{v})\triangleq \sum_{j\in[m]}v_{\pi(i),j}x^*_{i,j}-\delta\sum_{j\in[m]}x^*_{i,j}\log x^*_{i,j}~.
$$
Note that the optimum objective value $\OPT(\mathbf{v})$ of the entropy regularized matching problem (see Definition~\ref{defn:LBM}) is oblivious to the ordering induced by $\pi$ and is equal to $\sum_{i=1}^{km}\opt_i(\mathbf{v}) $. For simplicity of notation, we drop the input argument $\mathbf{v}$ from $\OPT(\mathbf{v})$, $\opt_i(\mathbf{v})$, $\alg{i}(\mathbf{v})$ and $\overline{\alg{}}_i(\mathbf{v})$ in the rest of the proof.

Now consider times $i=1,2,\ldots,\tau-1$. At each time $i$, a new replica $r_{\pi(i)}$ arrives. For any given allocation $\xbf'_i =(x'_{i,1},\ldots,x'_{i,m})$ of replica $r_{\pi(i)}$ to surrogates and any given scaled prices/dual variables $\gamma\alphabf^{(i)}$ at time $i$, define the contribution of replica $r_{\pi(i)}$ to the Lagrangian objective of
Lemma~\ref{obs:dual} as
\begin{align}
\label{eq:single-alloc}
\lagrange^{(i)}(\xbf'_i,\gamma\alphabf^{(i)})&\triangleq \sum_{j\in[m]}v_{\pi(i),j}\,x'_{i,j} - \delta \sum_{j\in[m]}x'_{i,j} \log x'_{i,j} + \sum_{j\in[m]}\gamma\alpha^{(i)}_j(\tfrac{1}{m}-x'_{i,j}).
\end{align}
At each time $i=1,2,\ldots,\tau-1$, the difference between $\overline{\xbf}_i=(\overline{x}_{i,1},\ldots,\overline{x}_{i,m})$ picked by Algorithm~\ref{alg:onlineLMB} and $\xbf^*_i=(x^*_{i,1},\ldots,x^*_{i,m})$ picked by the offline optimum is that the algorithm selects its matching (conditional) probabilities with respect to dual variables $\gamma\alphabf^{(i)}$, while the offline optimum selects its fractional matching
with respect to the optimal dual variables $\alphabf^*$
(Lemma~\ref{obs:dual}).  In fact, we have: 
$$\overline{\xbf}_i \in
\argmax_{\xbf'_i \in \Delta^m} \lagrange^{(i)}(\xbf'_i,\gamma
\alphabf^{(i)})~~~~~~,~~~~~~\xbf^*_i \in
\argmax_{\xbf'_i \in \Delta^m} \lagrange^{(i)}(\xbf'_i,
\alphabf^*)$$
The optimality of $\overline{\xbf}_i$ for dual variables $\gamma\alphabf^{(i)}$--combined with equation~\eqref{eq:single-alloc}-- implies 
\begin{equation*}
\overline{\alg{}}_{i}+\sum_{j\in[m]}\gamma\alpha_j^{(i)} \big(\tfrac{1}{m}-\overline{x}_{i,j} \big)\geq \opt_i +\sum_{j\in[m]}\gamma\alpha_j^{(i)} \big(\tfrac{1}{m}-x^*_{i,j} \big)~~.
\end{equation*}
By rearranging the terms, we have
\begin{align*}
\overline{\alg{}}_i&\geq \opt_i+\gamma\alphabf^{(i)}\cdot\overline{\xbf}_i-\gamma\alphabf^{(i)}\cdot\xbf^*_i\\
&=\ex{}{\opt_{i}}+\gamma\alphabf^{(i)}\cdot\overline{\xbf}_i-\gamma\alphabf^{(i)}\cdot\ex{}{\xbf^*_i}+\left(\opt_{i}-\ex{}{\opt_i}\right)-\gamma\alphabf^{(i)}\cdot\left(\xbf^*_i-\ex{}{\xbf^*_i}\right) \\
&\overset{(1)}{\geq} \ex{}{\opt_{i}}+\gamma\alphabf^{(i)}\cdot(\overline{\xbf}_i-\frac{1}{m}\mathbf{1})+ \left(\opt_{i}-\ex{}{\opt_i}\right)-\gamma\alphabf^{(i)}\cdot\left(\xbf^*_i-\ex{}{\xbf^*_i}\right)\\
&\overset{(2)}{=} \frac{{\opt}}{km}+\gamma\alphabf^{(i)}\cdot(\overline{\xbf}_i-\frac{1}{m}\mathbf{1})+ \left(\opt_{i}-\ex{}{\opt_i}\right)-\gamma\alphabf^{(i)}\cdot\left(\xbf^*_i-\ex{}{\xbf^*_i}\right)~,
\end{align*}
where inequality~(1) holds because:
 $$\forall j\in [m]:~~\ex{}{x^*_{ij}}=\frac{\sum_{i'\in[km]}x^*_{\pi(i'),j}}{km}\leq \frac{k}{km}=\frac{1}{m}~,$$
 and equality~(2) holds because:
 $$\ex{}{\opt_i}=\frac{\sum_{i'\in[km]}\opt_{\pi(i')}}{km}=\frac{\opt}{km}~.$$
Now, suppose the observed history up to time $i$ is denoted by $\hist{i-1}$. For each $i=1,2,\ldots,\tau-1$, consider a history path $\hist{i-1}$ that leads to $i\leq \tau-1$.  By taking expectation conditioned on any such history path $\hist{i-1}$, we have: 
  \begin{align*}
 \expect{\overline{\alg{}}_i \given \hist{i-1}}
  \geq&~\frac{\opt}{km}+\gamma\expect{\alphabf^{(i)}\cdot(\overline{\xbf}_i-\frac{1}{m}\mathbf{1})\given \hist{i-1}}
   + \left(\expect{\opt_{i}\given\hist{i-1}}-\ex{}{\opt_i}\right)\\
   &-\gamma\expect{\alphabf^{(i)}\cdot\left(\xbf^*_i-\ex{}{\xbf^*_i}\right)\given \hist{i-1}}\\
   \overset{(1)}{=}&~\frac{{\opt}}{km}+\gamma\alphabf^{(i)}\cdot(\expect{\overline{\xbf}_i\given\hist{i-1}}-\frac{1}{m}\mathbf{1})
   + \left(\expect{\opt_{i}\given\hist{i-1}}-\ex{}{\opt_i}\right)\\
   &-\gamma\alphabf^{(i)}\cdot\left(\expect{\xbf^*_i\given \hist{i-1}}-\ex{}{\xbf^*_i}\right)\\
    \overset{(2)}{=}&~\frac{{\opt}}{km}+\gamma\alphabf^{(i)}\cdot(\expect{{\xbf}_i\given\hist{i-1}}-\frac{1}{m}\mathbf{1})
   + \left(\expect{\opt_{i}\given\hist{i-1}}-\ex{}{\opt_i}\right)\\
   &-\gamma\alphabf^{(i)}\cdot\left(\expect{\xbf^*_i\given \hist{i-1}}-\ex{}{\xbf^*_i}\right)\\
    =&~\frac{{\opt}}{km}+\gamma\alphabf^{(i)}\cdot({\xbf}_i-\frac{1}{m}\mathbf{1})
   + \left(\expect{\opt_{i}\given\hist{i-1}}-\ex{}{\opt_i}\right)\\
   &-\gamma\alphabf^{(i)}\cdot\left(\expect{\xbf^*_i\given \hist{i-1}}-\ex{}{\xbf^*_i}\right)+\gamma\alphabf^{(i)}\cdot\left(\expect{{\xbf}_i\given\hist{i-1}}-{\xbf}_i\right)
 \end{align*}
where equality~(1) holds as $\alphabf^{(i)}$ is only a function of history up to time $i$, and equality~(2) holds as $\expect{\overline{\xbf}_i\given\hist{i-1},\pi(i)}=\expect{{\xbf}_i\given\hist{i-1},\pi(i)}$ for any $\pi(i)$, and therefore $\expect{\overline{\xbf}_i\given\hist{i-1}}=\expect{{\xbf}_i\given\hist{i-1}}$. To simplify the calculations, we introduce the following extra notations: 
\begin{align*}
O_i&\triangleq \expect{\opt_{i}\given\hist{i-1}}-\ex{}{\opt_i}\\
Z_i&\triangleq \gamma\alphabf^{(i)}\cdot\left(\expect{\xbf^*_i\given \hist{i-1}}-\ex{}{\xbf^*_i}\right)\\
L_i&\triangleq \gamma\alphabf^{(i)}\cdot\left(\expect{{\xbf}_i\given\hist{i-1}}-{\xbf}_i\right)
\end{align*}
By summing over times $i=1,2,\ldots,\tau$, we have: 
\begin{align}
\label{eq:regret}
\sum_{i=1}^{\tau-1}\expect{\overline{\alg{}}_i \given \hist{i-1}}\geq \frac{\tau-1}{km}\opt+\gamma\sum_{i=1}^{\tau-1}g_i(\alphabf^{(i)})-\sum_{i=1}^{\tau-1}\lvert O_i\rvert -\sum_{i=1}^{\tau-1}\lvert Z_i \rvert+\sum_{i=1}^{\tau-1}L_i~,
\end{align}
where $g_i:[0,1]^m\rightarrow \mathbb{R}$ is defined as follows for each $i=1,2,\ldots,\tau-1$:
\begin{equation*}
g_i(\alphabf)\triangleq\alphabf\cdot({\xbf}_i-\frac{1}{m}\mathbf{1})
\end{equation*}

In order to bound the term $\sum_{i=1}^{\tau-1}g_i(\alphabf^{(i)})$ in \eqref{eq:regret} from below, consider a full-information adversarial online linear maximization problem~\cite{hazan2016introduction, shalev2012online} for rounds $i=1,2,\ldots,\tau-1$~\footnote{Note that $\tau$ is a random variable, and hence the number of rounds in this online linear optimization problem is stochastic.}, where at each round the decision maker (player 1) chooses the dual vector $\alphabf^{(i)}\in\{\alphabf\in[0,1]^m:\lVert\alphabf\rVert_1\leq 1\}$, and an adaptive adversary (player 2) chooses the linear cost function $g_i(\alphabf)=\alphabf\cdot({\xbf}_i-\frac{1}{m})$ defined above. For any given adversarial realization of random variables $\{{\xbf}_i\}$, which defines the strategies of player 2, the goal of player 1 is to produce a sequence $\alphabf^{(1)},\alphabf^{(2)},\ldots,\alphabf^{(\tau-1)}$ that maximizes the linear objective function $\sum_{i=1}^{\tau-1}g_i(\alphabf^{(i)})$. 

Now, consider ``dual update'' steps of Algorithm~\ref{alg:onlineLMB}. These steps are essentially equivalent to player 1 running the exponential weight forecaster (also known as multiplicative weight updates) as a vanishing regret online learning algorithm in the aforementioned online linear maximization problem. This algorithm, which is parametric with learning rate $\eta$, picks the sequence $\alphabf^{(i)}$ for $i=1,2,\ldots,\tau-1$ to be the exponential weights distributions with weights $\eta\cdot k_j$, and guarantees the following regret bound~\cite{hazan2016introduction}: 
\begin{equation}
\label{eq:secondbound}
\sum_{i=1}^{\tau-1}g_i(\alphabf^{(i)})\geq (1-\eta)\underset{\lVert\alphabf\rVert_1\leq 1,\alphabf\geq \mathbf{0}}{\max}\sum_{i=1}^{\tau-1}g_i(\alphabf)-\frac{\log m}{\eta}\geq (1-\eta)(k-\frac{\tau-1}{m})-\frac{\log m}{\eta}
\end{equation}
where the last inequality holds because at the time $\tau-1$, either there exists $j$ such that $\sum_{i=1}^{\tau-1}{x}_{i,j}= k$, or $\tau-1=mk$ and all surrogate outcome budgets are exhausted. 
In the former case, we have  
$$\underset{\lVert\alphabf\rVert_1\leq 1,\alphabf\geq \mathbf{0}}{\max}\sum_{i=1}^{\tau-1}g_i(\alphabf)\geq \sum_{i=1}^{\tau-1}g_i(\mathbf{e}_j)\geq k-\frac{\tau-1}{m},$$ 
and in the latter case we have
$$\underset{\lVert\alphabf\rVert_1\leq 1,\alphabf\geq \mathbf{0}}{\max}\sum_{i=1}^{\tau-1}g_i(\alphabf)\geq 0\geq k-\frac{\tau-1}{m}.$$
By applying the regret bound in \eqref{eq:secondbound} to the RHS of the inequality in \eqref{eq:regret}, we have:
\begin{align*}
\sum_{i=1}^{\tau-1}\expect{\overline{\alg{}}_i \given \hist{i-1}}&\geq \frac{\tau-1}{km}\opt+\gamma(1-\eta)(k-\frac{\tau-1}{m})-\gamma\frac{\log m}{\eta}-\sum_{i=1}^{\tau-1}\lvert O_i\rvert -\sum_{i=1}^{\tau-1}\lvert Z_i \rvert+\sum_{i=1}^{\tau-1}L_i\\
&\overset{(1)}{\geq} \frac{\tau-1}{km}\opt+\opt(1-\eta)(1-\frac{\tau-1}{km})-\gamma\frac{\log m}{\eta}-\sum_{i=1}^{\tau-1}\lvert O_i\rvert -\sum_{i=1}^{\tau-1}\lvert Z_i \rvert+\sum_{i=1}^{\tau-1}L_i\\
&\overset{(2)}{\geq}\opt\left(1-\eta-O(1)\frac{\log m}{k\eta}\right)-\sum_{i=1}^{\tau-1}\lvert O_i\rvert -\sum_{i=1}^{\tau-1}\lvert Z_i \rvert+\sum_{i=1}^{\tau-1}L_i\\
&\overset{(3)}{\geq}\opt -\eta\left(km+km~O\left(\frac{\log m}{m\log(m/\eta)}\right)\right)-\sum_{i=1}^{\tau-1}\lvert O_i\rvert-\sum_{i=1}^{\tau-1}\lvert Z_i \rvert+\sum_{i=1}^{\tau-1}L_i\\
&=\opt-O(\eta km)-\sum_{i=1}^{\tau-1}\lvert O_i\rvert-\sum_{i=1}^{\tau-1}\lvert Z_i \rvert+\sum_{i=1}^{\tau-1}L_i\
\end{align*} 
where inequality~(1) above holds as $\gamma\geq \opt/k$, inequality~(2) above holds as $\gamma\leq O(1)~\opt/k$, and inequality~(3) above holds because $k\geq m\log(m/\eta)/\eta^2$ and $\opt(\vbf)\leq km$. Further notice that for times $i=1,2,\ldots,\tau-1$, if we consider history paths $\hist{i-1}$ in which $i\leq \tau-1$, we have:
\begin{align*}
\expect{{\alg{i}}\given \hist{i-1},\pi(i)}&=\sum_{j\in [m]}\expect{{x}_{i,j}\given \hist{i-1},\pi(i)} v_{\pi(i),j}\\
&=\sum_{j\in[m]}\expect{\overline{x}_{i,j}\given \hist{i-1},\pi(i)}v_{\pi(i),j}\geq \expect{\overline{\alg{}}_i\given \hist{i-1},\pi(i)} ~.
\end{align*}
Therefore, by taking expectation with respect to $\pi(i)$, we have $\expect{{\alg{}}_i\given \hist{i-1}}\geq \expect{\overline{\alg{}}_i\given \hist{i-1}}$ for $i=1,2,\ldots,\tau-1$. Denoting $\alg{}=\sum_{i=1}^{km}\alg{i}$, we have: 
\begin{align}
\expect{\alg{}}&\geq \expect{\sum_{i=1}^{\tau-1}\alg{i}}= \expect{\sum_{i=1}^{\tau-1}\expect{\alg{i}\given\hist{i-1}}}\geq \expect{\sum_{i=1}^{\tau-1}\expect{\overline{\alg{}}_{i}\given\hist{i-1}}} \nonumber\\
&\geq \opt-O(\eta km)-\expect{\sum_{i=1}^{\tau-1}\lvert O_i\rvert} -\expect{\sum_{i=1}^{\tau-1}\lvert Z_i \rvert}+\expect{\sum_{i=1}^{\tau-1}L_i}\nonumber\\
\label{eq:last-RHS}
&\geq  \opt-O(\eta km)-\expect{\sum_{i=1}^{km}\lvert O_i\rvert} -\expect{\sum_{i=1}^{km}\lvert Z_i \rvert}+\expect{\sum_{i=1}^{\tau-1}L_i}~.
\end{align}
 
We now finish the proof by:
\begin{enumerate}
\item[(i)] Showing  $\expect{\sum_{i=1}^{\tau-1}L_i}=0$, 
\item[(ii)] Upper bounding $\expect{\sum_{i=1}^{km}\lvert O_i\rvert}$ by $O\left(\sqrt{km\log(km)}\right)$ and $\expect{\sum_{i=1}^{km}\lvert Z_i\rvert}$ by $O\left(\gamma\sqrt{km\log(km)}\right)$. 
\end{enumerate}
Assuming (i) and (ii) are done, and recalling that  $k\geq m\log(m/\eta)/\eta^2$, we have:
\begin{align}
\label{eq:final1}
\expect{\sum_{i=1}^{km}\lvert O_i\rvert}\leq O(\sqrt{km\log(km)})=km~O\left(\eta\sqrt{\frac{\log(m^2\log(m/\eta))-2\log(1/\eta)}{m^2\log(m/\eta)}}\right)=O(\eta k)~,
\end{align}
Similarly, recalling that $\gamma\leq O(1)~\opt(\mathbf{v})/k\leq O(m)$, we have: 
\begin{align}
\label{eq:final2}
\expect{\sum_{i=1}^{km}\lvert Z_i\rvert}\leq O(\gamma\sqrt{km\log(km)})=O(\eta km)~
\end{align}
Combining the bounds \eqref{eq:final1} and \eqref{eq:final2} with $\expect{\sum_{i=1}^{\tau-1}L_i}=0$ and plugging them in \eqref{eq:last-RHS} gives us the final result, i.e., $\expect{\alg{}}/{km}\geq {\opt}/{km}-O(\eta)$.\\

In order to show (i), we simply use the fact that conditioned on any history path $\hist{i-1}$ for which $\tau-1\geq i$, we have: 
$$
\expect{L_i\given\hist{i-1}}=\expect{\alphabf^{(i)}\cdot(\expect{\xbf_i\given\hist{i-1}}-\xbf_i)\given\hist{i-1}}=\alphabf^{(i)}\cdot (\expect{\xbf_i\given\hist{i-1}}-\expect{\xbf_i\given\hist{i-1}})=0~.
$$
Therefore, $\expect{\sum_{i=1}^{\tau-1}L_i}=\sum_{i=1}^{km}\prob{\tau-1\geq i}\ex{\hist{i-1}:\tau-1\geq i}{\expect{L_i\given \hist{i-1}}}=0.$\\

In order to show (ii), we use the technical Lemma~\ref{lemma:technical}:
\begin{itemize}
\item  To upper bound $\expect{\sum_{i=1}^{km}\lvert O_i\rvert}$, let $d=1$, $T=km$, $z_i=\opt_{\pi^{-1}(i)}$, and $y_i=\expect{\opt_{i+1}\given\hist{i}}$ for $i=1,2,\ldots,km$. Notice that
\begin{align*}
y_i&=\expect{\opt_{i+1}\given\hist{i}}=\frac{\sum_{i'=i+1}^{km}{\opt_{i'}}}{km-i}=\frac{\sum_{i'=i+1}^{km}z_{i'}}{km-i}~,\\
y_0&=\expect{\opt_1}=\expect{\opt_{i+1}}=\opt/km~,~\\
\lvert O_i\rvert &=\lvert y_i-y_0\rvert
\end{align*}
Thus, the conditions of Lemma~\ref{lemma:technical} are satisfied and $\expect{\sum_{i=1}^{km}\lvert O_i\rvert}\leq O(\sqrt{km\log(km)})$.
\item To upper bound $\expect{\sum_{i=1}^{km}\lvert Z_i\rvert}$, let $d=m$, $T=km$, $\zbf_i=\xbf^*_{\pi^{-1}(i)}$ and $\ybf_i=\expect{\xbf^*_{i+1}\given\hist{i}}$ for $i=1,2,\ldots,km$. Again, notice that
\begin{align*}
\ybf_i&=\expect{\xbf^*_{i+1}\given\hist{i}}=\frac{\sum_{i'=i+1}^{km}{\xbf^*_{i'}}}{km-i}=\frac{\sum_{i'=i+1}^{km}\zbf_{i'}}{km-i}~,\\
\ybf_0&=\expect{\xbf^*_1}=\expect{\xbf^*_{i+1}}=\sum_{i=1}^{km}\xbf^*_i/km~.
\end{align*}
Therefore the conditions in the statement of Lemma~\ref{lemma:technical} are satisfied. Moreover, note that 
$$
\lvert Z_i\rvert=\gamma\lvert\alphabf^{(i)}\cdot(\ybf_i-\ybf_0)\rvert\leq \gamma \underset{j\in[m]}{\max}\left\lvert y_{i,j}-y_{0,j}\right\rvert~,
$$
where the inequality above holds as $\lVert\alphabf^{(i)} \rVert_1\leq 1$. Therefore, 
$$\expect{\sum_{i=1}^{km}\lvert Z_i\rvert}\leq O(\gamma\sqrt{km(\log(km)+\log(m))})=O(\gamma\sqrt{km\log(km)})~,$$ which concludes the proof.
\end{itemize}

\endproof

\subsubsection*{\underline{Step \rom{2}}: parameter $\gamma$ and approximating the offline optimal.} Pre-setting $\gamma$ to be an estimate of the optimal objective of the convex program in Definition~\ref{defn:LBM} is necessary for guarantee of Algorithm~\ref{alg:onlineLMB} in Proposition~\ref{prop:compt-analysis}. Also, $\gamma$ should be set in a symmetric and incentive compatible way across replicas, to preserve stationarity property. To this end, we look at an instance generated by an independent random draw of $mk$ replicas (while fixing the surrogates). In such an instance, we estimate the expected values by sampling and taking the empirical mean for each edge in the replica-surrogate bipartite graph. We then solve the convex program exactly (which can be done in polytime using an efficient separation oracle). Obviously, this scheme is incentive compatible as we do not even use the reported type of true agent in our calculation for $\gamma$, and it is symmetric across replicas. We now show how this approach leads to a constant approximation to the optimal value of the offline program in~\ref{defn:LBM} with high probability. Note that this scheme should be run offline as a \emph{pre-processing} step before running our online Algorithm~\ref{alg:onlineLMB} used in Proposition~\ref{prop:compt-analysis}.

\begin{proposition}
\label{prop:gamma-est}
If $k=\Omega(\frac{\log(1/\epsilon')}{\delta^2 m(\log m)^2})$ for some $\epsilon'>0$, then there exist a polynomial time algorithm that approximately calculates $\opt(\vbf)/k$, that is, it outputs $\gamma$ such that
$$ \OPT(\vbf)/k\leq \gamma\leq O(1) \OPT(\vbf)/k $$
with probability at least $1-\epsilon'$. Moreover, this algorithm only needs polynomial in $m$, $k$, $1/\delta$ and $1/\epsilon'$ samples to black-box allocation $\alloc$.
\end{proposition}

To formalize the approximation scheme and to prove Proposition~\ref{prop:gamma-est}, first fix the surrogate type profile $\sbf$. For a given replica profile $\rbf$ and replica-surrogate edge $(i,j)$, 
let $v_{i,j}(r_i)= \ex{}{v(r_i,\alloc(s_j))}$ and $\hat{v}_{i,j}(r_i)$ be the empirical mean of $N$ samples of the random variable $v(r_i,\alloc(s_j))$. Suppose  $\vbf(\rbf)$ and $\hat{\vbf}(\rbf)$ be the corresponding vectors of expected values and empirical means under replica profile $\rbf$. Now, draw $\rbf'$  independently at random from the distribution of $\rbf$. We now show that $\OPT(\hat{\vbf}(\rbf'))$ is a constant approximation to $\OPT(\vbf(\rbf))$ with high probability, and therefore we can use $\OPT(\hat{\vbf}(\rbf'))$ to set $\gamma$. 

We prove this in two steps. In Lemma~\ref{lem:step-1-est} we show for a given $\rbf'$, $\OPT(\hat{\vbf}(\rbf'))$ is a constant approximation to $\OPT({\vbf}(\rbf'))$ with high probability over randomness in $\{\alloc(s_j)\}$. Then, in Lemma~\ref{lem:step-2-est} we show if $\rbf$ and $\rbf'$ are two random independent draws from replica profile distribution then $\OPT(\vbf(\rbf'))$ is a constant approximation to $\OPT(\vbf(\rbf))$ with high probability over randomness in $\rbf$ and $\rbf'$. These two pieces together prove our claim. 

\begin{lemma}
\label{lem:step-1-est}
If $N\geq \frac{\log(4m^2 k/\epsilon')}{\delta^{2}(\log m) ^{2}}$, then 
$1/2\cdot\OPT(\vbf(\rbf'))\leq\OPT(\hat{\vbf}(\rbf'))\leq 2\OPT(\vbf(\rbf'))$
with probability at least $1-\epsilon'/2$.
\end{lemma}
\proof
By using standard Chernoff-Hoeffding bound together with union bound, with probability at least $1-2m^2ke^{-\frac{\delta^2(\log m)^2\cdot N}{2}}\geq 1-\epsilon'/2$ we have 
$$\forall (i,j)\in[km]\times[m]:~~~ \lvert \hat{v}_{i,j}(r'_i)-v_{i,j}(r'_i)\rvert \leq 1/2\cdot\delta\log m$$ 
Suppose $\xbf^{*}$ be the optimal solution of the regularized matching convex program with values $\vbf(\rbf')$ and  $\xbf^{**}$ be the optimal solution with values $\hat{\vbf}(\rbf')$. Denoting the entropy function by $H(\cdot)$, we have:
\begin{align}
\OPT(\hat{\vbf}(\rbf'))=&  \sum_{i}\left(\xbf^{**}_i \cdot \hat\vbf_i+\delta H(\xbf^{**}_i)\right)\geq  \sum_{i}\left(\xbf^{*}_i \cdot \hat\vbf_i+\delta H(\xbf^{*}_i)\right)\nonumber\\
\geq&  \sum_{i}\left(\xbf^{*}_i \cdot \vbf_i+\delta H(\xbf^{*}_i)\right)-\frac{\delta km\log m}{2}=\OPT(\vbf(\rbf'))-\frac{\delta km\log m}{2}\geq 1/2\cdot\OPT(\vbf(\rbf'))\nonumber
\end{align}
where the last inequality holds as $\OPT(\vbf(\rbf'))$ is bounded below by the value of the uniform allocation, i.e. $\OPT(\vbf(\rbf'))\geq \delta\cdot mk\log(m)$. Similarly, one can show $\OPT(\vbf(\rbf'))\geq 1/2\cdot\OPT(\hat\vbf(\rbf'))$, which completes the proof.
\endproof

Before proving the second step, we prove the following lemma, which basically shows the optimal value of regularized matching $\OPT(\vbf(\cdot))$ is a 1-Lipschitz multivariate function.
\begin{lemma}
\label{lem:1-lib}
For every $i\in[km]$, replica profile $\rbf$, and replica type $r'_i$ we have:
$$ \lvert\OPT(\vbf(r_i,r_{-i}))-\OPT(\vbf(r'_i,r_{-i}))\rvert \leq 1$$
\end{lemma}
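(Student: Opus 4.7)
My plan is to prove this via a direct \emph{swap} argument: take the optimal regularized matching for one instance and use it as a feasible solution for the other, bounding the difference in objective by a term that only depends on the single coordinate that was changed.

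Concretely, I would first observe that the feasibility constraints in Definition~\ref{defn:LBM} (the budget constraints $\sum_i x_{i,j} \leq k$ and $\sum_j x_{i,j} \leq 1$) depend only on the index structure of the bipartite graph, not on the values $\vbf$. Hence the feasible polytope is the same for the two instances $\vbf(r_i,r_{-i})$ and $\vbf(r'_i,r_{-i})$. Let $\xbf^{\star}$ be any optimal solution for $\vbf(r_i,r_{-i})$. Plugging $\xbf^{\star}$ into the objective evaluated at $\vbf(r'_i,r_{-i})$ yields a lower bound on $\OPT(\vbf(r'_i,r_{-i}))$, and the difference between the two objectives is
\[
  \sum_{j} x^{\star}_{i,j}\bigl(v_{i,j}(r_i)-v_{i,j}(r'_i)\bigr),
\]
since the entropy term $-\delta\sum_{i',j} x_{i',j}\log x_{i',j}$ and the contributions of every other replica $i'\neq i$ are identical across the two instances.

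The key step is then a one-line bound: since $v_{i,j}(r_i),v_{i,j}(r'_i)\in[0,1]$ and $\sum_j x^{\star}_{i,j}\leq 1$ by the replica-side budget constraint, we get
\[
  \Bigl|\sum_j x^{\star}_{i,j}\bigl(v_{i,j}(r_i)-v_{i,j}(r'_i)\bigr)\Bigr| \;\leq\; \sum_j x^{\star}_{i,j}\;\leq\; 1.
\]
Therefore $\OPT(\vbf(r_i,r_{-i}))-\OPT(\vbf(r'_i,r_{-i}))\leq 1$. Swapping the roles of $r_i$ and $r'_i$ and applying the same argument to an optimal solution for the primed instance gives the reverse inequality, completing the proof.

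There is no real obstacle here; the argument is just the standard ``primal-feasibility swap'' proof of Lipschitz continuity of the value function of an LP/convex program with respect to coefficients that appear linearly in the objective. The only thing worth being careful about is making sure the entropy regularizer genuinely cancels between the two objectives (it does, because the regularizer is a function of $\xbf$ alone and $\xbf^{\star}$ is held fixed during the swap), so that the per-replica value bound $v_{i,j}\in[0,1]$ together with the replica-side matching constraint $\sum_j x_{i,j}\leq 1$ delivers the clean constant $1$.
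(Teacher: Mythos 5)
Your proposal is correct and matches the paper's own proof: both take the optimal solution of one instance, use it as a feasible point for the other (the feasible region being value-independent), and bound the objective gap by $\sum_j x^{\star}_{i,j}\bigl(v_{i,j}(r_i)-v_{i,j}(r'_i)\bigr)\leq 1$ using $v_{i,j}\in[0,1]$ and the replica-side constraint, with the entropy term cancelling. No differences worth noting.
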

\begin{proof}
Let $\xbf$ and $\xbf'$ be the optimal assignments in $\OPT(\vbf(r_i,r_{-i}))$ and $\OPT(\vbf(r'_i,r_{-i}))$ respectively. We have
\begin{align*}
\OPT(\vbf(r_i,r_{-i}))&=\sum_{l}(\xbf_{l}\cdot \vbf_{l}(r_l)+H(\xbf_{l}))\geq \sum_{l\neq i}(\xbf'_{l}\cdot \vbf_{l}(r_l)+H(\xbf'_{l}))+\xbf'_{i}\cdot \vbf_{i}(r_i)+H(\xbf'_{i})\\
&\geq \sum_{l\neq i}(\xbf'_{l}\cdot \vbf_{l}(r_l)+H(\xbf'_{l}))+\xbf'_{i}\cdot \vbf_{i}(r'_i)+H(\xbf'_{i})-1=\OPT(\vbf(r'_i,r_{-i}))-1
\end{align*}
where the last inequality holds because $\xbf'_i\cdot(\vbf_i(r'_i)-\vbf_i(r_i))\leq 1$. Similarly, $\OPT(\vbf(r'_i,r_{-i}))\geq \OPT(\vbf(r_i,r_{-i}))-1$ by switching the roles of $r_i$ and $r'_i$.
\end{proof}

\begin{lemma} If $k\geq \frac{32 \log(8/\epsilon')}{\delta^2 m(\log m)^2}$, then $ 1/2\cdot\OPT(\vbf(\rbf))\leq\OPT(\vbf(\rbf'))\leq 3/2\cdot \OPT(\vbf(\rbf))$  with probability at least $1-\epsilon'/2$.   
\label{lem:step-2-est}
\end{lemma}
\proof
The lemma can be proved directly by McDiarmid’s inequality~\cite{mcdiarmid1989method} and using the 1-Lipschitzness proved in Lemma~\ref{lem:1-lib}. For completeness, we prove it here from first principles. We start by defining the following Doob martingale sequence~\citep{MR-10}, where (conditional) expectations are taken over the randomness in the replica profile $\rbf$ :
\begin{align*}
X_0&=\ex{}{\OPT(\vbf(\rbf))}\\
X_n&=\ex{}{\OPT(\vbf(\rbf))|r_1,\cdots,r_n},~~~n=1,2,\ldots,km
\end{align*}
It is easy to check that $\ex{}{X_n|r_1,\ldots,r_{n-1}}=X_{n-1}$, and therefore $\{X_n\}$ forms a martingale sequence with respect to $\{r_n\}$. Moreover, $\lvert X_n -X_{n-1}\rvert \leq 1$ because of Lemma~\ref{lem:1-lib}. Now, by using Azuma–Hoeffding bound for martingales, we have
$$\Pr\{\lvert X_{km}-X_0 \rvert \geq \frac{\delta km \log(m)}{4}\}\leq 2e^{\frac{-km\delta^2(\log m)^2}{32}} $$
 and therefore w.p. at least $1-2e^{\frac{-km\delta^2(\log m)^2}{32}}$, we have $\lvert \OPT(\vbf(\rbf))-\ex{}{\OPT(\vbf(\rbf))}\rvert\leq \frac{\delta km \log(m)}{4}$. Similarly, w.p. at least $1-2e^{\frac{-km\delta^2(\log m)^2}{32}}$, we have $\lvert \OPT(\vbf(\rbf'))-\ex{}{\OPT(\vbf(\rbf))}\rvert\leq \frac{\delta km \log(m)}{4}$. Therefore w.p. at least $1-4e^{\frac{-km\delta^2(\log m)^2}{32}}$ we have $\lvert \OPT(\vbf(\rbf))-\OPT(\vbf(\rbf'))\rvert\leq \frac{\delta km \log(m)}{2}$. By using the lower-bound of $\delta km \log (m)$ for $\OPT(\vbf(\rbf))$ (due to uniform assignment), we conclude that with probability at least $1-4e^{\frac{-km\delta^2(\log m)^2}{32}}\geq 1-\epsilon'/2$ we have the following, as desired:
\begin{equation*}
 1/2\cdot\OPT(\vbf(\rbf))\leq\OPT(\vbf(\rbf'))\leq 3/2\cdot \OPT(\vbf(\rbf)) \qedhere
  \end{equation*}  
\endproof

By putting Lemma~\ref{lem:step-1-est} and Lemma~\ref{lem:step-2-est} together, we immediately get the following corollary.
\begin{corollary} 
\label{cor:estimating-offline}
If $N\geq \frac{\log(4m^2 k/\epsilon')}{\delta^{2}(\log m) ^{2}}$ and $k\geq \frac{32 \log(8/\epsilon')}{\delta^2 m(\log m)^2}$, then $\gamma=\frac{4}{k}\OPT(\hat\vbf(\rbf'))$ satisfies 
$$\OPT(\vbf(\rbf))/k \leq \gamma \leq 12\cdot \OPT(\vbf(\rbf))/k~,$$
with probability at least $1-\eta$.
\end{corollary}

We conclude the above discussion as the proof of Proposition~\ref{prop:gamma-est} is immediate from Corollary~\ref{cor:estimating-offline}.

\subsubsection*{\underline{Step 3:} convex objective of optimal entropy regularized matching vs. welfare of optimal matching. }
The last ingredient we need in the proof is the following lemma.
\begin{lemma} 
\label{lem:reg-matching-welfare}
With parameter $\delta \geq 0$ the per-replica convex objective value of the optimal entropy regularized matching is within an
additive $\delta \log m$ of the welfare of the optimal matching.
\end{lemma}
 
\proof
The entropy $-\sum_{i,j} x_{i,j} \log x_{i,j}$ is non-negative and
maximized with $x_{i,j} = 1/m$.  The maximum value of the entropy term
is thus $\delta mk \log m$.  The optimal convex objective value of the entropy regularized matching exceeds that of the optimal matching;
thus, it is within an additive $\delta mk \log m$ of the welfare of the optimal (unregularized) matching. As a result, the per-replica convex objective value of the optimal entropy regularized matching is within $\delta \log m$ of the per-replica welfare of the optimal matching.

\endproof

\subsubsection*{\underline{Step 4:} putting all the pieces together} We conclude the section by combining Propositions~\ref{prop:gamma-est},~\ref{prop:compt-analysis}, and Lemma~\ref{lem:reg-matching-welfare} to prove the main theorem.

\proof[Proof of Theorem~\ref{thm:main-bic-reduction}]
Let $\delta=\frac{\epsilon}{3}\cdot\frac{1}{\log m}$ and $\eta=\frac{\epsilon}{3}\cdot \frac{1}{c}$, where $c$ is a constant such that the per-replica welfare of Algorithm~\ref{alg:onlineLMB} is within an additive $c\cdot \eta$ of the per-replica offline optimal objective value of the entropy regularized matching problem when estimation $\gamma$ is set appropriately (Proposition~\ref{prop:compt-analysis}). Moreover, let $k=\frac{m\log (m/\eta)}{\eta^2}=\Theta(\frac{m\log (m/\epsilon)}{\epsilon^2})$, to satisfy the required condition in Proposition~\ref{prop:compt-analysis}. Finally, set $\epsilon'=\frac{\epsilon}{3}$ and  the number of samples $N$ in the scheme described in Corollary~\ref{cor:estimating-offline} to $N=\Theta(\frac{\log(m^2 k/\epsilon')}{\delta^{2}(\log m) ^{2}})=\Theta(\frac{\log(m/\epsilon)}{\epsilon^2(\log m)^2})$, so that $\gamma$ is the appropriate estimator used in Proposition~\ref{prop:compt-analysis} with probability $1-\frac{\epsilon}{3}$.

The expected per-replica welfare of Algorithm~\ref{alg:onlineLMB} is within an additive $c\eta=\epsilon/3$ of the per-replica optimal convex objective value of the entropy regularized matching, with probability at least $1-\epsilon$, due to Propositions~\ref{prop:compt-analysis} and~\ref{prop:gamma-est}. This probability is over the internal randomness of the samples used in the estimation scheme for $\gamma$ in Corollary~\ref{cor:estimating-offline}. The per-replica optimal objective value of the entropy regularized matching is at most $1$. Therefore, the overall expected per-replica welfare of Algorithm~\ref{alg:onlineLMB} is within an additive $\epsilon/3+\epsilon'=2\epsilon/3$ of the per-replica optimal convex objective value of the entropy regularized matching. Following Lemma~\ref{lem:reg-matching-welfare}, the per-replica optimal value of the entropy regularized matching is within an additive $\delta \log m=\epsilon/3$ of the per-replica welfare of the optimal matching, and therefore the expected per-replica welfare of Algorithm~\ref{alg:onlineLMB} is within an additive $2\epsilon/3+\epsilon/3=\epsilon$ of the per-replica welfare of the optimal matching.

Finally, due to Proposition~\ref{prop:gamma-est} and the fact that $k$ is polynomial in $m$ and $1/\epsilon$, the algorithm's running time is polynomial in $m$ and $1/\epsilon$. 
\endproof

\subsection{The End-to-End BIC Black-box Reduction}
We now summarize the proposed BIC black-box reduction. We incorporate our surrogate selection rule (By using Algorithm~\ref{alg:onlineLMB} as the matching algorithm in Definition~\ref{defn:rsba}) in the reduction under ideal-model proposed in \citet{HKM-15} and we set the market size parameter $m$ accordingly to maintain the  welfare preservation property of this reduction.

\begin{definition}[\citet{HKM-15}] The doubling dimension of a metric space is the smallest constant $\Delta$ such that every bounded subset $S$ can be partitioned into at most $2^\Delta$  subsets, each having diameter at most half of the diameter of $S$.
\end{definition}

We now use the following theorem in~\cite{HKM-15}, which states the welfare preservation of the maximum weight replica-surrogate matching in the ideal model if $m$ is large enough. 

\begin{theorem}[\citet{HKM-15}] 
\label{thm:HKM-doubling-dimension}
For any agent with type space $\typeS{}$ that has doubling dimension $\Delta\geq 2$, if 
$$m\geq \frac{1}{2\epsilon^{\Delta+1}}~,$$
then the expected per-replica welfare of the
maximum matching in the ideal model of \citet{HKM-15}  with load $k=1$ is within an additive $2\epsilon$ of the expected welfare of allocation $\alloc$ for that agent. 
\end{theorem}

We now have the following immediate corollary by combining Theorem~\ref{thm:main-bic-reduction} with Theorem~\ref{thm:HKM-doubling-dimension}.

\begin{corollary}[\emph{BIC black-box reduction}]
If the market size parameter $m$ is set to $\lceil\frac{1}{2\epsilon^{\Delta+1}}\rceil$, and the parameters of Algorithm~\ref{alg:onlineLMB} are set as stated in Theorem~\ref{thm:main-bic-reduction}, then the composition of surrogate selection rule defined by Algorithm~\ref{alg:onlineLMB} with the allocation $\alloc$ is (1)~a BIC mechanism, (2)~the expected welfare is within an additive $3\epsilon$ of the expected welfare of $\alloc$ for each agent, and (3)~its running time is polynomial in $n$ and $1/\epsilon$ given access to black-box oracle $\alloc$.\footnote{Our result obviously holds when the doubling dimensions of type spaces are considered to be constant. For arbitrary large-dimensional type spaces, the running time is polynomial in $n$ and $1/\epsilon^{\Delta}$.}
\end{corollary}

\section{Implicit Payment Computations}
\label{sec:implicitpayments}
In this section we describe one standard reduction for computing implicit payments in our general setting, given access to a BIC allocation algorithm $\tilde{\alloc}$: a multi-parameter
counterpart of the single-parameter payment computation
procedure used for example by \citet{APTT-04,HL-10}, 
which makes $n+1$ calls to $\tilde{\alloc}$, thus incurring a factor
$n+1$ overhead in running time. A different implicit payment
computation procedure, described in~\citet{BKS-13,BKS-15}, 
avoids this overhead by calling $\tilde{\alloc}$ only once
in expectation, but incurs a $1-\epsilon$ loss in expected welfare
and potentially makes payments of magnitude $\Theta(1/\epsilon)$ 
{\em from} the mechanism {\em to} the agents.

The implicit payment computation procedure assumes that
the agents' type spaces $(\typeS{k})_{k \in [n]}$ are
{\em star-convex at 0}, meaning that 
for any agent $k$, any type $\type{k} \in \typeS{k}$, 
and any scalar $\lambda \in [0,1]$, there is another
type $\lambda \type{k} \in \typeS{k}$ with the property
that $v(\lambda \type{k}, o) = \lambda v(\type{k},o)$
for every $o \in \outS$. (The assumption is without
loss of generality, as argued in the next paragraph.)
The implicit payment computation procedure, applied
to type profile $\typeP$, samples 
$\lambda \in [0,1]$ uniformly at random and computes
outcomes $o^0 \triangleq \tilde{\alloc}(\typeP)$ as well as
$o^k \triangleq \tilde{\alloc}(\lambda \type{k},\typeO{-k})$
for all $k \in [n]$. The payment charged to agent $k$
is $v(\type{k},o^0)-v(\type{k},o^k)$. Note that, in
expectation, agent $k$ pays
\[
  \pay{k}(\typeP) = v(\type{k},\tilde{\alloc}(\typeP)) -
    \int_{0}^{1} 
    v(\type{k},\tilde{\alloc}(\lambda \type{k},\typeO{-k}))
    \, d\lambda,
\]
in accordance with the payment identity for 
multi-parameter BIC mechanisms when type
spaces are star-convex at 0; see~\citet{BKS-13}
for a discussion of this payment identity.

Finally, let us justify the assumption that
$\typeS{k}$ is star-convex for all $k$. 
This assumption is without
loss of generality for the allocation algorithms
$\tilde{\alloc}$ that arise from the RSM reduction, because
we can enlarge the type space $\typeS{k}$ if necessary
by adjoining types of the form $\lambda \type{k}$
with $\type{k} \in \typeS{k}$ and $0 \leq \lambda < 1$.
Although the output of the original allocation algorithm
$\alloc$ may be undefined when its input type profile
includes one of these artifically-adjoined types,
the RSM reduction never inputs such a type into $\alloc$.
It only calls $\alloc$ on profiles
of surrogate types sampled from
the type-profile distribution $\prior{}$, whose support
excludes the artificially-adjoined types.
Thus, even when the input to $\tilde{\alloc}$ includes
an artifically-adjoined type $\lambda \type{k}$, it occurs
as one of the replicas in the reduction. The behavior of 
algorithm $\tilde{\alloc}$ remains well-defined in this 
case, because replicas are
only used as inputs to the valuation function $v(r_i,o_j)$,
whose output is well-defined even when $r_i = \lambda \type{k}$
for $\lambda < 1$.

\section{Summary and Future Directions}
In this paper we investigated the question of designing Bayesian incentive compatible blackbox reductions in mechanism design. We provided a polynomial time reduction from Bayesian incentive compatible mechanism design to Bayesian algorithm design for welfare maximization problems.Unlike prior results, our reduction achieves exact incentive compatibility for problems with multi-dimensional and continuous type spaces. We showed how to  employ and generalize the computational model in the literature on Bernoulli Factories. In particular we considered a generalization which we called the expectations from samples computational model, in which a problem instance is specified by a function mapping the expected values of a set of input distributions to a distribution over outcomes. The challenge is to give a polynomial time algorithm that exactly samples from the distribution over outcomes given only sample access to the input distributions. In this model, we gave a polynomial time algorithm for the function given by exponential weights: expected values of the input distributions correspond to the weights of alternatives and we wish to select an alternative with probability proportional to an exponential function of its weight.  As we showed, this algorithm is the key ingredient in designing an incentive compatible mechanism for bipartite matching, which can be used to make the approximately incentive compatible reduction of \citet{HKM-15} exactly incentive compatible. 

While the existence of such a reduction is good news for Bayesian
mechanism design, there are limitations that are mostly unavoidable.
\begin{itemize}

\item{\emph{Beyond expected social welfare.}}
It is tempting to try converting an arbitrary
algorithm  for  an  optimization  problem  into  a  computionally  efficient  Basyesian
truthful mechanism.  Interestingly,  this is not possible for all optimization objectives. In particular, \citet{CIL-12} show that no black-box reduction is possible
for the objective of makespan, even if we only require Bayesian truthfulness and an
average-case performance guarantee.  This precludes extending our result beyond
the expected-welfare objective in a general fashion.
\item{\emph{Exponential  dependence  on  dimension.}}
Notably,  our reduction is a fully polynomial time approximation scheme to the reduction of
\citet{HKM-15}.   However,  the  running  time  of  \citet{HKM-15} has
exponential dependence on $\Delta$. Therefore, our reduction also suffers from
the  same  exponential  dependence.   Intuitively,  this  seems  to  be  unavoidable  for
reductions that can only access the type space by sampling and can only access the
outcome space by calling the allocation function on sampled type profiles.
\end{itemize}

We  conclude  with  some  open  questions.   The  first  natural  question,  directly
related  to  the  second  limitation  above,  is  to  determine  whether  or  not  the  exponential dependence on $\Delta$ in the black-box reduction is unavoidable.  Are there
black-box reductions whose running time exhibits a milder dependence on the structure of the type space?  Another interesting question is to find more connections between Bayesian mechanism design and the expectations from samples computational model. Finally, one might be interested in a generalization of Bernoulli race to more interesting combinatorial settings, e.g. can one sample a base of a matroid given access to marginal coins, so that the sampling procedure satisfies the marginals \emph{exactly}? It would also be interesting to see  if these tools result in simpler blackbox reductions. 

\section*{Acknowledgment}
The authors would like to thank Nikhil Devanur, Shipra Agrawal and Pooya Jalaly for helpful discussions and comments on various parts of the paper. The first author was supported by NSF CAREER Award CCF-1350900. The second author was supported in part by NSF grant  CCF 1618502. The third author was partially supported by NSF grant CCF-1512964. We should also want to emphasize that an early conference version of this work was presented at The Annual ACM Symposium on Theory of Computing (STOC)~\citep{DHKN-17}.
\bibliographystyle{plainnat}
\bibliography{refs}
\appendix

\section{Surrogate Selection and BIC Reduction - Further Details}
\label{app:omitted}

\begin{lemma}
\label{lem:dispreserv}
If matching algorithm $M(\rbf,\sbf)$ produces a perfect $k$-to-$1$ matching for the instance in Definition~\ref{defn:rsba}, then its corresponding surrogate selection rule, denoted by $\SSR^{M}$, is stationary
\end{lemma}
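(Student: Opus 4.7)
The plan is to unpack the definition of stationarity: the surrogate selection rule $\SSR^M$ is stationary provided that, when its input $t$ is drawn from the type prior $\prior{}$, the selected surrogate type is also distributed according to $\prior{}$. The proof will rest on two simple facts, namely the i.i.d.\ nature of the replicas and surrogates sampled inside the rule, combined with the combinatorial structure of a perfect $k$-to-$1$ matching.

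First, I would establish the key distributional symmetry. When $t\sim\prior{}$, placing $t$ into slot $i^*$ of the replica profile alongside the $km-1$ fresh i.i.d.\ samples of Definition~\ref{defn:rsba} yields a vector $\rbf$ whose $km$ coordinates are mutually i.i.d.\ from $\prior{}$. Hence, under the stationarity experiment, the joint distribution of $(\rbf,\sbf,i^*)$ consists of $km$ i.i.d.\ replicas from $\prior{}$, $m$ i.i.d.\ surrogates from $\prior{}$, and an independent uniform index $i^*\in\{1,\ldots,km\}$.

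Let $\sigma:\{1,\ldots,km\}\to\{1,\ldots,m\}$ denote the matching returned by $M(\rbf,\sbf)$; if $M$ is randomized, treat its internal coin flips as part of what we condition on. Because $i^*$ is independent of every other source of randomness, for any measurable set $A\subseteq\typeS{}$ we can write
\begin{align*}
\prob{s_{\sigma(i^*)}\in A\given \rbf,\sbf,\sigma}=\frac{1}{km}\sum_{i=1}^{km}\mathbf{1}\{s_{\sigma(i)}\in A\}.
\end{align*}
Now the perfect $k$-to-$1$ hypothesis enters: each surrogate $j$ has exactly $k$ preimages under $\sigma$, so the right-hand side collapses to $\frac{k}{km}\sum_{j=1}^{m}\mathbf{1}\{s_j\in A\}=\frac{1}{m}\sum_{j=1}^{m}\mathbf{1}\{s_j\in A\}$.

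Finally, taking expectation over $\sbf$ and using that each $s_j\sim\prior{}$ i.i.d., one obtains $\prob{s_{\sigma(i^*)}\in A}=\prior{}(A)$, which is exactly the stationarity condition. I do not anticipate a serious obstacle here; the only mild subtlety is guaranteeing the argument also handles randomized $M$, which is why I condition on the realized matching $\sigma$ so that the perfect $k$-to-$1$ property can be applied pathwise rather than only in expectation.
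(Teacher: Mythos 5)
Your proof is correct and follows essentially the same route as the paper's: the paper's appeal to the "principle of deferred decisions" is precisely your observation that $i^*$ remains uniform and independent of $(\rbf,\sbf,\sigma)$, after which the perfect $k$-to-$1$ property makes the selected surrogate uniform over $\{1,\ldots,m\}$ and the i.i.d.\ surrogates give marginal $\prior{}$. Your version merely makes the conditioning and the handling of randomized $M$ explicit, which is a faithful formalization rather than a different argument.
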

\begin{proof}{Proof of Lemma~\ref{lem:dispreserv}.}
 Each surrogate $s_j$ is an i.i.d. sample from $\prior{}$. Moreover, by the principle of deferred decisions the index $i^*$ (the real agent's index in the replica type profile) is a uniform random index in $[mk]$, even after fixing the matching. Since this choice of replica is uniform in $[mk]$ and $M$ is a perfect $k$-to-$1$ matching, the selection of surrogate outcome is uniform in $[m]$, and therefore the selection of surrogate type associated with this outcome is also uniform in $[m]$. As a result, the output distribution of the selected surrogate type is $\prior{}$.
\end{proof}
\begin{lemma} 
\label{lem:singleBIC}
If $M(\rbf,\sbf)$ is a feasible replica-surrogate $k$-to-$1$ matching and is a truthful allocation rule (in expectation over allocation's random coins) for all replicas (i.e. assuming each replica is a rational agent, no replica has any incentive to misreport), then the composition of $\SSR^M$ and interim allocation algorithm $\alloc(.)$ forms a BIC allocation algorithm for the original mechanism design problem. 
\end{lemma}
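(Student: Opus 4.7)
The plan is to reduce Bayesian incentive compatibility of the composed allocation to the hypothesized truthfulness of $M$ on the replica--surrogate bipartite graph, using stationarity to decouple agent $k$'s report from the distribution of outcomes delivered to the other agents.

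First I would fix an arbitrary agent $k$ with true type $\type{k}$ and examine the interim composed allocation $\tilde{\alloc}^k(\bid{k})$ seen by agent $k$ when reporting $\bid{k}$, assuming all other agents report truthfully. By Lemma~\ref{lem:dispreserv}, the surrogate selection rule $\SSR^M$ is stationary, so the joint distribution of the surrogate types fed into $\alloc$ from agents $k'\neq k$ coincides with $\priorJ^{-k}$ regardless of $\bid{k}$. Consequently, the interim distribution over outcomes that agent $k$ receives when reporting $\bid{k}$ is exactly $\alloc^k(\sigma^*)$, where $\sigma^*$ is the surrogate type selected by $\SSR^M$ on input $\bid{k}$.

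Next I would rewrite agent $k$'s interim expected value in terms of the matching value function. Defining $w_{i,j}(t) \triangleq \expect[o \sim \alloc^k(s_j)]{v(t,o)}$, agent $k$'s interim expected value from reporting $\bid{k}$ given true type $\type{k}$ is $V_k(\bid{k};\type{k}) = \expect{w_{i^*,j^*(\bid{k})}(\type{k})}$, where the expectation is over the random replica index $i^*$, the other replicas $\rbf_{-i^*}\sim (\prior{k})^{mk-1}$, the surrogates $\sbf \sim (\prior{k})^m$, the internal randomness of $M$, and the randomness in $\alloc$. This coincides precisely with the expected value that replica $i^*$ obtains under matching values $w$ when its true value is $\type{k}$ but it reports $\bid{k}$. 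By the hypothesis that $M$ is truthful for all replicas under these matching values, there exist payments $\pi^M$ rendering $M$ interim BIC for every replica. I would then lift these payments to the outer mechanism: let $\tilde{p}_k(\bid{k})$ denote the expected payment of replica $i^*$ under $M$ when $i^*$ reports $\bid{k}$, averaged over $i^*$, $\rbf_{-i^*}$, $\sbf$, and $M$'s internal randomness. BIC of $M$ for replica $i^*$ directly yields $V_k(\type{k};\type{k}) - \tilde{p}_k(\type{k}) \geq V_k(\bid{k};\type{k}) - \tilde{p}_k(\bid{k})$ for every $\bid{k} \in \typeS{k}$, which is exactly BIC for agent $k$ in the composed mechanism. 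Applying this argument independently to each agent $k$ completes the plan.

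The main subtlety will be in carefully organizing the randomness so that $\tilde{p}_k(\bid{k})$ is a well-defined function of the report alone, and in verifying that stationarity transports BIC-per-replica inside $M$ to BIC-per-agent in the outer mechanism. Both resolve cleanly because each agent's replica-surrogate instance uses fresh i.i.d.\ draws from $\prior{k}$, independent of $\bid{k}$ and of the analogous instances of the other agents; Lemma~\ref{lem:dispreserv} then ensures that running the agents' surrogate selection rules simultaneously does not perturb the Bayesian assumption faced by any one of them.
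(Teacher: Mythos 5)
Your proposal is correct and follows essentially the same route as the paper's proof: identify agent $k$'s interim value with the value replica $i^*$ obtains in the matching $M$, invoke the hypothesized per-replica truthfulness of $M$ (with its associated payments) to rule out profitable misreports, and use stationarity from Lemma~\ref{lem:dispreserv} plus averaging over the replica index, the other replicas, the surrogates, and the internal randomness to transport this to BIC in the outer mechanism. Your version is somewhat more explicit than the paper's about why stationarity of the \emph{other} agents' surrogate selection rules preserves the Bayesian environment faced by agent $k$, but this is a matter of exposition rather than a different argument.
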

\begin{proof}{Proof of Lemma~\ref{lem:singleBIC}.}
 Each replica-agent $i\in[mk]$ (\emph{including} the real agent $i^*$) bests off by reporting her true replica type under some proper payments. Now, consider an agent in the original mechanism design problem  with true type $t$. For any given surrogate type profile $\sbf$, using the $\SSR^M$-reduction the agent receives the same outcome distribution as the one he gets matched to in $M$ in a Bayesian sense, simply because of stationary property of $\SSR^M$ (Lemma~\ref{lem:dispreserv}). As allocation $M$ is incentive compatible, this agent doesn't benefit from miss-reporting her true type as long as the value he receives for reporting $t'$ is $v(t,\alloc(\SSR^M(t')))$.  Therefore conditioning on $\sbf$ and non-real replicas in $\rbf$, the final allocation is BIC from the perspective of this agent. The lemma then follows by averaging over the random choice of $\sbf$ and non-real agent replicas in $\rbf$.
\end{proof}

\end{document}